\documentclass[letterpaper, 10 pt, conference]{ieeeconf} 
\IEEEoverridecommandlockouts
\usepackage{amsmath,amssymb,amsfonts}
\usepackage{algorithmic}
\usepackage{graphicx}
\usepackage{bm}
\usepackage{textcomp}
\usepackage{xcolor}
\usepackage{xprintlen}
\usepackage{cases}
\usepackage{tabularx,ragged2e}
\usepackage{booktabs}
\usepackage{nccmath}
\usepackage{caption}
\usepackage{float}
\usepackage{subcaption}
\usepackage{mathtools}
\def\BibTeX{{\rm B\kern-.05em{\sc i\kern-.025em b}\kern-.08em
    T\kern-.1667em\lower.7ex\hbox{E}\kern-.125emX}}

\renewcommand{\vec}[1]{\bm{#1}}
\newcommand{\true}{\top}
\newcommand{\false}{\bot}

\newtheorem{problem}{\textbf{Problem}}
\newtheorem{proposition}{\bf{Proposition}}
\newtheorem{definition}{\bf{Definition}}
\newtheorem{lemma}{\bf{Lemma}}
\newtheorem{theorem}{\bf{Theorem}}
\newtheorem{remark}{\bf{Remark}}
\newtheorem{assumption}{\bf{Assumption}}

\newtheorem{example}{\textbf{Example}}
\newtheorem{Fact}{\textbf{Fact}}

\begin{document}

\title{Communication-Constrained STL Task Decomposition through\\ Convex Optimization}

\author{Gregorio Marchesini, Siyuan Liu, 
Lars Lindemann and Dimos V. Dimarogonas
	\thanks{This work was supported in part by the Horizon Europe EIC project SymAware (101070802), 
the ERC LEAFHOUND Project, the Swedish Research Council (VR), Digital Futures, and the Knut and Alice Wallenberg (KAW) Foundation.}
\thanks{Gregorio Marchesini, Siyuan Liu, and Dimos V. Dimarogonas are with the Division
of Decision and Control Systems, KTH Royal Institute of Technology, Stockholm, Sweden.
	E-mail: {\tt\small \{gremar,siyliu,dimos\}@kth.se}.  Lars Lindemann is with the Thomas Lord Department of Computer Science, University of Southern California, Los Angeles, CA, USA.
	E-mail: {\tt\small \{llindema\}@usc.ed}.   
}
}

\maketitle
\begin{abstract}
In this work, we propose a method to decompose signal temporal logic (STL) tasks for multi-agent systems subject to constraints imposed by the communication graph. Specifically, we propose to decompose tasks defined over multiple agents which require multi-hop communication, by a set of sub-tasks defined over the states of agents with 1-hop distance over the communication graph. To this end, we parameterize the predicates of the tasks to be decomposed as suitable hyper-rectangles. Then, we show that by solving a constrained convex optimization, optimal parameters maximising the volume of the predicate's super-level sets can be computed for the decomposed tasks. In addition, we provide a formal definition of conflicting conjunctions of tasks for the considered STL fragment and a formal procedure to exclude such conjunctions from the solution set of possible decompositions. The proposed approach is demonstrated through simulations.


\end{abstract}


\section{INTRODUCTION}
Temporal logics have recently received increasing attention as they allow for expressing high-level collaborative tasks among agents in a multi-agent system (MAS). Among the different temporal logics, signal temporal logic (STL) is an expressive language that has been successfully applied for both high-level planning  \cite{sun2022multi,barbosa2019guiding} and low-level feedback control \cite{lindemann2018control,LarsControl2} of multi-agent systems. The possibility of expressing both the temporal and spatial behavior of MASs without leveraging abstraction-based techniques makes STL particularly appealing for real-time control and planning by avoiding the curse of dimensionality.\par
At the current state of the art, three main low-level control approaches have been developed for the satisfaction of high-level STL tasks for MASs: prescribed performance control \cite{ppcLars,chen2023distributed}, mixed-integer linear programs (MILP) \cite{MILP1,MILP2,MILP3,MILP4}, and time-varying control barrier functions (CBF) \cite{lindemann2018control,LarsControl2}. 
In all the aforementioned works, it is taken for granted that the connectivity of the communication networks is ensured and always compatible with the task dependency graphs of MASs. 
Indeed, it is commonly assumed that each agent has either global state information about the system, or that agents involved in the same tasks can share state information by leveraging 1-hop communication. Nevertheless, this assumption is commonly violated in many real-world scenarios and fully decentralised control can not be achieved by the previous low-level control approaches when this assumption is not met. In this work, we propose a first step toward relaxing such assumption by decomposing tasks defined over multiple agents subject to multi-hop communication into conjunctions of sub-tasks that can be achieved by leveraging only 1-hop communication over the communication graph.\par

The results in \cite{charitidou2021signal} established a paradigm to decompose STL tasks defined over MASs into a set of tasks that can be independently satisfied by distinct sub-clusters of agents. However, the mismatch between the communication graph and task dependency graph of the MASs is not considered in the decomposition process, while each sub-cluster is considered to be fully connected. Similarly, \cite{leahy2022fast} and \cite{leahy2023rewrite} employ MILP to achieve the same type of decomposition as in \cite{charitidou2021signal}, but the underlying communication topology of the multi-agent system is again not considered, and thus decentralised control cannot be leveraged in the absence of fully connected sub-clusters. In \cite{MILP3}, communication constraints are considered in the motion planning approach of MASs. A MILP is solved to compute a valid state trajectory for the  MAS that satisfies a global STL task, while maintaining an optimized inter-agent communication quality of service. Nevertheless, perfect inter-agent communication is still assumed for agents under the same collaborative tasks.  

\par

The main contributions of this work are twofold. First, we propose a communication-constrained decomposition for a fragment of STL tasks defined over the absolute and relative state of the agents in the MAS. Namely, we exploit the communication graph of the MAS to decompose tasks whose predicate functions depend on the agents which require multi-hop communication into a conjunction of sub-tasks defined over the agents that have one 1-hop communication. Each newly introduced sub-task is defined over a parametric predicate function depending on the relative state of couples of agents in the system linked by the communication graph. 
By leveraging convex optimization tools, optimal parameters for each sub-task are then computed such that the volume of the super-level set of the predicate functions is maximised \cite{charitidou2021signal}. 
The decomposed tasks allow for a fully decentralised control approach for MASs. Second, we formally derive a set of conditions that can lead to un-satisfiable conjunctions of tasks in the considered STL fragments. We also provide a set of convex constraints to be included in the original convex optimization such that un-satisfiable conjunctions of tasks are excluded from the solution set of our task decomposition. \par

The paper is organized as follows. Section \ref{Preliminaries} presents preliminaries and Section \ref{problemformulation} introduces the problem definition. Section \ref{task decomposition} proposes our main task decomposition result. 
In Section \ref{conflicts}, we provide a formal definition of conflicting conjunctions of tasks and then propose a strategy to avoid such conflicts.  
Simulation results are provided in Section \ref{simulations} to demonstrate our decomposition approach, 
while concluding remarks are given in Section \ref{conclusions}.

\textit{Notation}: Bold letters indicate vectors while capital letters indicate matrices. Vectors are considered to be column vectors and the notation $\vec{x}[k]$ indicates the $k$-th element of $\vec{x}$. We define the minimum of a vector $\vec{a}\in \mathbb{R}^n$ element-wise as $\overset{\star}{\min}(\vec{a}) := \min_{k=1,\ldots n} \{\vec{a}[k]\}$. The notation $|\mathcal{A}|$ denotes the cardinality of the set $\mathcal{A}$, the symbol $\oplus$ indicates the Minkowski sum, 
the notation $\prod_{i=1}^k\mathcal{A}_i$ represents the Cartesian product of the sets $\mathcal{A}_i$ and $-\mathcal{A}=\{x|x=-v\; \forall v\in\mathcal{A}\}$. We denote the power set of $\mathcal{A}$ as $\mathcal{P}(\mathcal{A})$. The identity matrix of dimension $n$ is denoted as $I_n$. The set $\mathbb{R}_{+}$ denotes the non-negative real numbers.

\section{PRELIMINARIES}\label{Preliminaries}
Let $\mathcal{V}=\{1,\ldots N\}$ be the set of indices assigned to each agent in a  multi-agent system. We write the input-affine nonlinear dynamics for each agent  $i\in \mathcal{V}$ as
\begin{equation}\label{eq:agent dynamics}
\dot{\vec{x}}_i = \mathfrak{f}_i(\vec{x}_i) + \mathfrak{g}_i(\vec{x}_i)\vec{u}_i
\end{equation}
where $\vec{x}_i\in \mathbb{X}_i\subset\mathbb{R}^{n_i}$ is the state of the $i$-th agent and $\vec{u}_i\in \mathbb{U}_i\subset \mathbb{R}^{m_i}$ is the associated bounded control input. We assume, without loss of generality, that $n_i=n$, $\forall i\in\mathcal{V}$. Let $\mathfrak{f}_i : \mathbb{R}^{n} \rightarrow \mathbb{R}^{n}$ and $\mathfrak{g}_i : \mathbb{R}^{n} \rightarrow \mathbb{R}^{n \times m_i}$ be locally Lipschitz continuous functions of the agent state. We denote the full state of the system as $\vec{x}:=[\vec{x}_1^T,\vec{x}_2^T,\ldots \vec{x}_N^T]^T$. Given a control input $\vec{u}_i : [t_0,t_1]\rightarrow \mathbb{U}_i$, we define the state trajectory $\vec{x}_i: [t_0,t_1]\rightarrow \mathbb{X}_i$ for agent $i\in\mathcal{V}$ if $\vec{x}_i(t)$ satisfies \eqref{eq:agent dynamics} for all $t\in[t_0,t_1]$. 
We Also define the \textit{relative} state vector $\vec{e}_{ij}:=\vec{x}_j-\vec{x}_{i}$ for $i,j\in\mathcal{V}$.

\subsection{Signal Temporal Logic}
STL is a predicate logic applied to formally define spatial and temporal behaviours of real-valued continuous-time signals \cite{maler2004monitoring}. The atomic elements of STL are Boolean predicates $\mu:\mathbb{R}\rightarrow \{\true,\false\}$ defined as
$
\mu :=
\bigl\{\begin{smallmatrix*}
\true &\text{if} \; h(\vec{x})\geq 0 \\
\false &\text{if} \; h(\vec{x})<0 ,
\end{smallmatrix*}
$
where $h(\vec{x}): \mathbb{R}^n\rightarrow \mathbb{R}$ is a scalar valued \textit{predicate function}. The value of $h$ can generally depend upon the state of all the agents of the multi-agent system or a subset of the former. The STL grammar is recursively defined as:
$$
\phi::=\true|\mu|\lnot \phi|\phi_1 \land \phi_2|F_{[a,b]}\phi | G_{[a,b]}\phi| \phi_1 U_{[a,b]}\phi_2
$$
where $F_{[a,b]}$, $G_{[a,b]}$ and $U_{[a,b]}$ are the temporal \textit{eventually}, \textit{always} and \textit{until} operators over the time interval $[a,b]\subset \mathbb{R}_{+}$. We indicate that a state trajectory $\vec{x}(t)$ satisfies task $\phi$ at time $t$ as $\vec{x}(t)\models \phi$. The classical STL \textit{semantics} define the conditions such that $\vec{x}(t)\models\phi$ \cite{maler2004monitoring}. In the current work, we leverage the robust quantitative STL semantics: $\rho^{\mu}(\vec{x}, t)=h(\vec{x}(t)), \rho^{\neg \phi}(\vec{x}, t)=-\rho^\phi(\vec{x}, t), \rho^{\phi_1 \wedge \phi_2}(\vec{x}, t)= 
\min \left(\rho^{\phi_1}(\vec{x}, t), \rho^{\phi_2}(\vec{x}, t)\right),  \rho^{\phi_1 \mathcal{U}_{[a, b]} \phi_2}(\vec{x}, t)= 
\max _{t_1 \in[t+a, t+b]} \min \left(\rho^{\phi_2}\left(\vec{x}, t_1\right), \min _{t_2 \in\left[t, t_1\right]} \rho^{\phi_1}\left(\vec{x}, t_2\right)\right),\\  \rho^{F_{[a, b]} \phi}(\vec{x}, t)=\max _{t_1 \in[t+a, t+b]} \rho^\phi\left(\vec{x}, t_1\right), \rho^{G_{[a, b]} \phi}(\vec{x}, t)= \min _{t_1 \in[t+a, t+b]} \rho^\phi\left(\vec{x}, t_1\right),
$
and we recall the relations $\rho^{\phi}(\vec{x},0)>0\Rightarrow\vec{x}(t)\models \phi$ \cite{sadraddini2015robust,farahani2015robust,raman2015reactive}. In this work, we focus on STL tasks defined over predicate functions that depend on the state $\vec{x}_i$ of a single agent and/or on the relative state vector $\vec{e}_{ij}$ between two agents. Namely, we make use of the following STL fragment:
\begin{subequations}\label{eq:working fragment}
\begin{align}
\phi_{i} := F_{[a,b]}\mu_i|G_{[a,b]}\mu_i|\phi^{1}_{i} \land \phi^{2}_{i}, \label{eq:single agent spec}
 \\
\phi_{ij} := F_{[a,b]}\mu_{ij}|G_{[a,b]}\mu_{ij}|\phi^{1}_{ij}\land \phi^{2}_{ij}, \label{eq:multi agent spec}
\end{align}
\end{subequations}
where $
\mu_i :=
\bigl\{\begin{smallmatrix*}
\true &\text{if} \; h_i(\vec{x}_i)\geq 0 \\
\false &\text{if} \; h_i(\vec{x}_i)<0 ,
\end{smallmatrix*}
$, $\mu_{ij} :=
\bigl\{\begin{smallmatrix*}
\true &\text{if} \; h_{ij}(\vec{e}_{ij})\geq 0 \\
\false &\text{if} \; h_{ij}(\vec{e}_{ij})<0 ,
\end{smallmatrix*}
$ with predicate functions $h_{i}(\vec{x}_i)$ and $h_{ij}(\vec{e}_{ij})$. We refer to tasks as per \eqref{eq:single agent spec} as \textit{independent} and tasks as per \eqref{eq:multi agent spec} as \textit{collaborative}. Additionally, we identify the respective super-level sets as
\begin{equation}\label{eq:super level sets}
\begin{aligned}
        \mathcal{B}_i&=\{ \vec{x}_i\in\mathbb{X}_i|h_i(\vec{x}_i)\geq0\} ;\\
\mathcal{B}_{ij}&=\{ \vec{e}_{ij}\in\mathbb{X}_i\oplus(-\mathbb{X}_j)|h_{ij}(\vec{e}_{ij})\geq0\}.
\end{aligned}
\end{equation}
\subsection{Communication and Task Graphs}
We define an \textit{undirected} graph over the set of agents $\mathcal{V}$ as $\mathcal{G}(\mathcal{V},\mathcal{E})\in \Gamma$, where $\mathcal{E}\subseteq \mathcal{V}\times \mathcal{V}$ is the set of undirected edges of $\mathcal{G}$ and $\Gamma$ is the set of undirected graphs over the nodes $\mathcal{V}$. We define the set of neighbours of vertex $i$ as $\mathcal{N}(i)=\{j|(i,j) \in \mathcal{E} \land i\neq j\}$ and the extended neighbour set as $\bar{\mathcal{N}}(i):= \mathcal{N} \cup \{i\}$ such that self-loops are considered. Let the graph-valued function $\text{\textbf{add}}(\cdot,\cdot): \Gamma\times\mathcal{V}\times \mathcal{V}\rightarrow \Gamma$ be such that $\mathcal{G}'(\mathcal{E}',\mathcal{V})=\text{\textbf{add}}(\mathcal{G},\mathcal{Q})$ with $\mathcal{E}'=\mathcal{E}\cup\mathcal{Q}$ for a given set of edges $\mathcal{Q}\subset \mathcal{V}\times \mathcal{V}$. Similarly we define $\text{\textbf{del}}(\cdot,\cdot): \Gamma\times\mathcal{V}\times \mathcal{V}\rightarrow \Gamma$ such that $\mathcal{G}'(\mathcal{E}',\mathcal{V})=\text{\textbf{del}}(\mathcal{G},\mathcal{Q})$ and $\mathcal{E}'=\mathcal{E}\setminus\mathcal{Q}$. Let the vector $\vec{\pi}_i^j\in \mathcal{V}^l$ represent a \textit{directed} path of length $l$ defined as a vector of non-repeated indices in $\mathcal{V}$ such that $\vec{\pi}_i^j[k]\in \mathcal{V}\; \forall k=1,\ldots l$ , $\vec{\pi}_i^j[r] \neq \vec{\pi}_i^j[s] \;\forall s\neq r$ , $(\vec{\pi}_i^j[r],\vec{\pi}_i^j[r+1])\in \mathcal{E}$ and $(\vec{\pi}_i^j[1],\vec{\pi}_{i}^j[l])=(i,j)$. Let $\vec{\omega} \in \mathcal{V}^l$ represent a \textit{directed} closed cycle path, such that $\vec{\omega}$ is a \textit{directed} path and $\vec{\omega}[1]=\vec{\omega}[l]$. Let $\epsilon:\mathcal{V}^l\rightarrow \mathcal{P}(\mathcal{E})$ be a set-valued function such that $\epsilon(\vec{\pi}_i^j)=\{(\vec{\pi}_i^j[k],\vec{\pi}_i^j[k+1])| k=1,\ldots l-1\}$. We establish the relations
\begin{equation}\label{eq:edge sequence}
\vec{e}_{ij}=\sum_{(r,s)\in\epsilon(\bm{\pi}_{i}^j)}\bm{e}_{rs}, \quad \vec{0}=\sum\limits_{(r,s)\in\epsilon(\bm{\omega})}\bm{e}_{rs}.
\end{equation}
We further make a distinction between two types of graphs that refer to a multi-agent system. Namely, we define the \textit{communication graph} $\mathcal{G}_c(\mathcal{V},\mathcal{E}_c)\in \Gamma$ such that $(i,j)\in \mathcal{E}_c\subset \mathcal{V}\times \mathcal{V}$ if $i$ and $j$ are able to communicate their respective state to each other. It is assumed that each agent is always able to communicate with itself. We also define the \textit{task} graph $\mathcal{G}_\psi(\mathcal{V},\mathcal{E}_\psi)\in \Gamma$ such that $(i,j)\in \mathcal{E}_\psi$ if there exists a collaborative task $\phi_{ij}$ as per \eqref{eq:multi agent spec} between agent $i$ and $j$. Independent tasks $\phi_{i}$ as per \eqref{eq:single agent spec} induce self-loops in the task graph (see in Fig. \ref{fig:task and communication graph example}). We repeatedly make use of the subscript $c$ and $\psi$ to differentiate among properties of the communication and task graph, respectively. For instance, $\mathcal{N}_c(i)$ and $\mathcal{N}_\psi(i)$ indicate the neighbour set of agent $i$ in the communication graph and task graph, respectively. 
\begin{figure}[t]
    \centering
\includegraphics[width=0.45\textwidth]{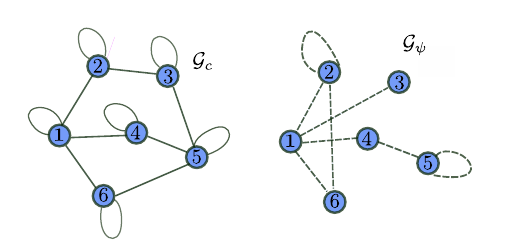}
    \caption{Simple example of communication (left) and task graph (right) for a multi-agent system with 6 agents. The task and communication graph are mismatching in their case.}
    \label{fig:task and communication graph example}
    \vspace{-0.5cm}
\end{figure}
We define $\psi$ as the \textit{global} task assigned to the MAS as
\begin{equation}\label{eq:global specification}
\psi := \bigwedge_{i=1}^N  \left(\phi_{i} \land \bigwedge_{j\in \mathcal{N}_\psi(i)}\phi_{ij}\right).
\end{equation}
Global tasks according to \eqref{eq:global specification} are particularly suitable for defining, e.g. time-varying relative formations. We propose an example to clarify the notation introduced so far.
\begin{example}
Consider the communication and task graphs in Fig \ref{fig:task and communication graph example}. Agents $2$ and $5$ have an independent task $\phi_{2}$, $\phi_{5}$ respectively, while $\phi_{12}, \phi_{13}, \phi_{14}, \phi_{16}, \phi_{56}, \phi_{26}$ are collaborative tasks. The path $\vec{\pi}_1^3=[1,2,3]$ connects agent $1$ to 3 with length $l=3$ and $\epsilon(\vec{\pi}_1^3)=\{(1,2),(2,3)\}$. Likewise, $\vec{\pi}_{6}^2=[6,1,2]$ and $\epsilon(\vec{\pi}_6^2)=\{(6,1),(1,2)\}$. Agents 1, 2, 6 form a cycle of tasks $\vec{\omega}=[1,2,6,1]$ in $\mathcal{G}_{\psi}$.
\end{example}
We conclude this section by stating the three main assumptions that hold throughout the work:
\begin{assumption}(Connectivity)\label{connettivity assumption}
    The communication graph $\mathcal{G}_c$ is a time-invariant connected graph.
\end{assumption}
\begin{assumption}(Concavity)\label{concavity}
    The predicate functions $h_{ij}$ and $h_i$ are concave functions of $\vec{e}_{ij}$ and $\vec{x}_i$ respectively.
\end{assumption}
\begin{assumption}(Task symmetry)\label{task symmetry}
For each STL task $\phi_{ij}$ in \eqref{eq:multi agent spec}, we have that $\phi_{ij}=\phi_{ji}\; \forall (i,j)\in\mathcal{E}_\psi$.
\end{assumption}
The first assumption is required to obtain a decomposition and ensure that such decomposition remains valid over time. We leave the case of time-varying $\mathcal{G}_c$ as future work. The second assumption is required to maintain the super-level sets $\mathcal{B}_{ij}$ in \eqref{eq:super level sets} convex. 
Note that the same assumptions were considered in \cite{charitidou2021signal} for a similar task decomposition. The last Assumption \ref{task symmetry} is not restrictive since the $\mathcal{G}_\psi$ is undirected.

\section{PROBLEM FORMULATION}\label{problemformulation}
Previous works in STL control often assume that $\mathcal{G}_c$ is fully connected or that $\psi$ can be decomposed into a conjunction of tasks defined over the state of fully connected sub-clusters of agents in $\mathcal{G}_c$. In the current work, we seek to drop these assumptions. Indeed, given the structure of $\psi$ as a conjunction of collaborative tasks among couples of agents or independent tasks, then the satisfaction of $\psi$ can be achieved by designing a fully decentralised low-level controller for each agent $i$, requiring only state information from the 1-hop neighbours in $\mathcal{N}_c(i)$ when $\mathcal{N}_\psi(i)\subseteq\mathcal{N}_c(i)$ \cite{LarsControl2}. However, a decentralised control approach that can guarantee the satisfaction of $\psi$ when $\mathcal{N}_\psi(i)\not\subseteq\mathcal{N}_c(i)$ is not available in the literature. Thus, we wish to exploit $\mathcal{G}_c$ to construct a new task $\bar{\psi}$ in the form of \eqref{eq:global specification} and a new graph $\mathcal{G}_{\bar{\psi}}$ such that $\mathcal{N}_{\bar{\psi}}(i)\subseteq\mathcal{N}_c(i)$ and $(\vec{x},t)\models\bar{\psi}\Rightarrow (\vec{x},t)\models\psi$. With this goal, we decompose the original tasks $\phi_{ij}$, with $(i,j)\in\mathcal{E}_\psi\setminus\mathcal{E}_c$, over the relative states of a path of agents $\vec{\pi}_i^j$ over $\mathcal{G}_c$. We formalise the problem as follows: 

\begin{problem}\label{problem1}
Consider the multi-agent system with agents in $\mathcal{V}$, communication graph $\mathcal{G}_c$ and task graph $\mathcal{G}_{\psi}$ such that $\psi$ is according to \eqref{eq:global specification} and $\mathcal{E}_\psi\setminus\mathcal{E}_c\neq \emptyset$. Then, define a new global task $\bar{\psi}$ in the form
\begin{equation}\label{eq:global specification rewritten}
\bar{\psi} := \bigwedge_{i=1}^N  \left(\phi_{i} \land \bigwedge_{j\in \mathcal{N}_\psi(i)\cap \mathcal{N}_c(i)}\phi_{ij} \wedge \bigwedge_{j\in \mathcal{N}_\psi(i)\setminus\mathcal{N}_c(i)} \phi^{\pi_i^j}\right),
\end{equation}
with the new task graph being $\mathcal{G}_{\bar{\psi}}$, such that $\forall (i,j) \in \mathcal{E}_\psi\setminus \mathcal{E}_c$ 
\begin{equation}\label{eq:path specification}
\phi^{\pi_i^j}:=\bigwedge_{(r,s)\in\epsilon(\pi_i^j)} \bar{\phi}^{\pi_i^j}_{rs},
\end{equation}
 where $\vec{\pi}_{i}^j$ are paths in $\mathcal{G}_c$ such that $\epsilon(\pi_{i}^j)\subset \mathcal{E}_c$ and $\bar{\phi}^{\pi_i^j}_{rs}$ are tasks of type \eqref{eq:multi agent spec} to be appropriately defined such that $\vec{x}(t) \models \bar{\psi} $ implies $ \vec{x}(t)\models \psi$.
\end{problem}
\begin{example}
    Consider Fig \ref{fig:path specification image}. Agents 1 and 4 share a collaborative task $\phi_{14}$, which cannot be directly satisfied as the two agents are not communicating (indicated by a dashed line). The task $\phi_{14}$ is then replaced by the task $\phi^{\pi_1^4}= \bar{\phi}^{\pi_1^4}_{12}  \land \bar{\phi}^{\pi_1^4}_{23} \land \bar{\phi}^{\pi_1^4}_{34}$ where $\pi_{1}^4=[1,2,3,4]$.
\end{example}
\begin{figure}
    \centering
    \includegraphics[scale=0.66,trim={0 1.3cm 0 1.3cm},clip]{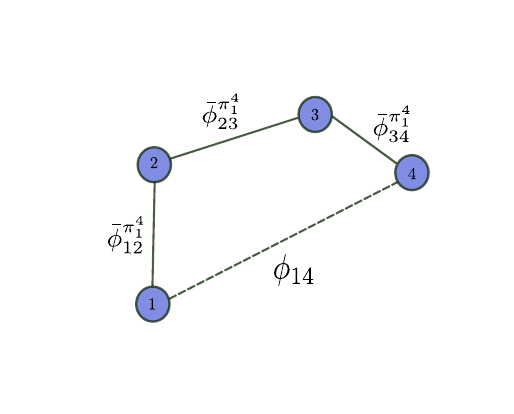}
    \caption{Example of decomposition according to \eqref{eq:path specification}.}
    \label{fig:path specification image}
    \vspace{-0.4cm}
\end{figure}
In the next sections, we will develop on how the tasks $\bar{\phi}^{\pi_i^j}_{rs}$ are defined according to $\mathcal{G}_c$ and the temporal properties of the collaborative tasks $\phi_{ij}$ with $(i,j)\in\mathcal{E}_{\psi}\setminus\mathcal{E}_{c}$.


\section{TASK DECOMPOSITION}\label{task decomposition}
We outline the task decomposition approach applied to obtain a single task $\phi^{\pi_{i}^{j}}$, according to $\eqref{eq:path specification}$, from a given collaborative task $\phi_{ij}=T_{[a,b]}\mu_{ij}$ with $T\in\{G,F\}$ as defined in \eqref{eq:multi agent spec} (see Fig. \ref{fig:path specification image}). In the case where $\phi_{ij}$ contains conjunctions of tasks as $\phi_{ij}=\bigwedge_k\phi_{ij}^{k}$, the method developed in this section is applied for each $\phi_{ij}^k$, leading to $\phi_{ij}$ being decomposed as $\phi^{\pi_i^j} = \bigwedge_k(\phi^{\pi_i^j})^{k} $. Each $(\phi^{\pi_i^j})^{k}$ is obtained with the same approach applied in the single task case presented in this section. We further elaborate on this point in Remark \ref{remark multiple formulas}.
\subsection{Path decomposition of STL tasks}
Constructing task $\phi^{\pi_i^j}$ requires two steps: 1) find a path $\vec{\pi}_{i}^j$ from agents $i$ to $j$ through $\mathcal{G}_c$, 2) find a family of tasks $\bar{\phi}^{\pi_{i}^{j}}_{rs}$ that can be applied to construct $\phi^{\pi_{i}^{j}}$ as per \eqref{eq:path specification} such that $\vec{x}(t)\models \phi^{\pi_{i}^{j}}\Rightarrow \vec{x}(t)\models \phi_{ij}$. The solution to step 1 is readily available from the literature as there exists a plethora of algorithms for finding a path from node $i$ to node $j$ in a connected graph \cite{lavalle2006planning}. We highlight that finding the shortest path connecting two nodes is not a requirement for our work and thus we select the well-known Dijkstra algorithm. On the other hand, for the solution of step 2, we leverage axis-aligned $n$-dimensional hyper-rectangles as suitable predicate functions for the tasks $\bar{\phi}^{\pi_{i}^{j}}_{rs}$ \cite{ziegler2012lectures,froitzheim2016efficient}. The following definitions and properties of hyper-rectangles are provided:
\begin{definition}(\cite[Ch .1.1]{ziegler2012lectures}\cite[Def 3.6]{froitzheim2016efficient})\label{hyper-rectangle}
 Given $\vec{\nu}\in\mathbb{R}^n$ such that $\vec{\nu}[k]\in \mathbb{R}_+ \quad \forall k\in 1,\ldots n$ and $\vec{p}\in \mathbb{R}^n$, an axis-aligned hyper-rectangle $\mathcal{H}(\vec{p},\vec{\nu})$ is defined as $\mathcal{H}(\vec{p},\vec{\nu})=\prod_{k=1}^n[\vec{p}[k]-\vec{\nu}[k]/2,\vec{p}[k]+\vec{\nu}[k]/2]$. Equivalently,  $\mathcal{H}(\vec{p},\vec{\nu})=\{\vec{\zeta}\in\mathbb{R}^n|A(\vec{\zeta}-\vec{p})-\vec{b}(\vec{\nu})\geq\vec{0}\}$ such that $\vec{b}(\vec{\nu}) =[\vec{\nu}^T/2,-\vec{\nu}^T/2]$ and $A = [I_n,-I_n]^T$.
\end{definition}
\begin{proposition}(\cite[Ch. 1.1]{ziegler2012lectures})\label{convex combination}
Any point $\vec{\zeta}\in\mathcal{H}(\vec{p},\vec{\nu})$ is a convex combination of the set of vertices $\mathcal{P}(\vec{p},\vec{\nu}):=\{\vec{v}\in \mathbb{R}^n | \vec{v}[s]=\vec{p}[s] + \vec{\nu}[s]/2 \; \text{or} \; \vec{v}[s]=\vec{p}[s] - \vec{\nu}[s]/2 \; \forall s=1,\ldots n \}$, where $|\mathcal{P}(\vec{p},\vec{\nu})|=2^n$, such that $\vec{\zeta}=\sum_{i=1}^{2^n}\lambda_i\vec{v}_i$, with $\sum_{i}^{2n}\lambda_i=1$, $0\geq\lambda_i\geq1$ and $\vec{v}_i\in \mathcal{P}(\vec{p},\vec{\nu})$.
\end{proposition}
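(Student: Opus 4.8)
The plan is to prove the classical fact that an axis-aligned hyper-rectangle is the convex hull of its $2^n$ vertices, by exhibiting for every $\vec{\zeta}\in\mathcal{H}(\vec{p},\vec{\nu})$ an explicit set of convex coefficients built as products of per-coordinate barycentric weights. First I would unpack the coordinate description: $\vec{\zeta}\in\mathcal{H}(\vec{p},\vec{\nu})$ means $\vec{\zeta}[s]\in[\vec{p}[s]-\vec{\nu}[s]/2,\ \vec{p}[s]+\vec{\nu}[s]/2]$ for each $s=1,\ldots,n$ by Definition \ref{hyper-rectangle}, so there exists $t_s\in[0,1]$ with $\vec{\zeta}[s]=(1-t_s)\bigl(\vec{p}[s]-\vec{\nu}[s]/2\bigr)+t_s\bigl(\vec{p}[s]+\vec{\nu}[s]/2\bigr)$. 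When $\vec{\nu}[s]>0$ this $t_s$ is unique, and when $\vec{\nu}[s]=0$ any choice works; the latter is the only degenerate case, and it merely makes some of the $2^n$ vectors in $\mathcal{P}(\vec{p},\vec{\nu})$ coincide, which does not affect the convex-combination claim (repeated vertices simply share weight). When every side length $\vec{\nu}[s]$ is positive the $2^n$ sign choices give distinct vectors, matching the stated cardinality.

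Next I would index $\mathcal{P}(\vec{p},\vec{\nu})$ by sign patterns $\sigma\in\{-1,+1\}^n$, writing $\vec{v}_\sigma[s]=\vec{p}[s]+\sigma_s\vec{\nu}[s]/2$, and define the candidate coefficients
\[
\lambda_\sigma:=\prod_{s\,:\,\sigma_s=+1}t_s\,\prod_{s\,:\,\sigma_s=-1}(1-t_s).
\]
These are nonnegative, and expanding the product gives $\sum_{\sigma\in\{-1,+1\}^n}\lambda_\sigma=\prod_{s=1}^n\bigl(t_s+(1-t_s)\bigr)=1$, so in particular each $\lambda_\sigma\in[0,1]$ and the $\lambda_\sigma$ form a valid convex weighting of the $2^n$ vertices.

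The crux is the coordinate-wise verification that $\sum_\sigma\lambda_\sigma\vec{v}_\sigma=\vec{\zeta}$. Fixing $s$, linearity gives $\sum_\sigma\lambda_\sigma\vec{v}_\sigma[s]=\vec{p}[s]+\frac{\vec{\nu}[s]}{2}\bigl(\sum_{\sigma:\sigma_s=+1}\lambda_\sigma-\sum_{\sigma:\sigma_s=-1}\lambda_\sigma\bigr)$, and marginalizing out all coordinates $r\neq s$ yields $\sum_{\sigma:\sigma_s=+1}\lambda_\sigma=t_s$ and $\sum_{\sigma:\sigma_s=-1}\lambda_\sigma=1-t_s$; hence the bracket equals $2t_s-1$ and the right-hand side collapses to $\vec{p}[s]-\vec{\nu}[s]/2+t_s\vec{\nu}[s]=\vec{\zeta}[s]$. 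Running this over all $s$ finishes the argument. An equivalent route is induction on $n$, slicing $\mathcal{H}(\vec{p},\vec{\nu})=\mathcal{H}(\vec{p}',\vec{\nu}')\times[\vec{p}[n]-\vec{\nu}[n]/2,\vec{p}[n]+\vec{\nu}[n]/2]$ and combining the inductive convex combination of the projected point $\vec{\zeta}'$ with the one-dimensional convex combination along the last axis, which doubles the vertex count at each step. I find the direct product-of-weights argument cleaner. There is no serious obstacle here: the only thing requiring care is the combinatorial bookkeeping of the $2^n$ weights and the harmless degeneracy when some side length $\vec{\nu}[s]$ vanishes.
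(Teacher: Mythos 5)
Your proof is correct and complete: the product-of-barycentric-weights construction $\lambda_\sigma=\prod_{s:\sigma_s=+1}t_s\prod_{s:\sigma_s=-1}(1-t_s)$, together with the marginalization $\sum_{\sigma:\sigma_s=+1}\lambda_\sigma=t_s$, is the standard argument for this fact, and your handling of the degenerate case $\vec{\nu}[s]=0$ is the right remark to make. The paper itself gives no proof (it cites \cite[Ch.~1.1]{ziegler2012lectures}), so there is nothing to diverge from; your argument is exactly the one the citation stands in for.
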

\begin{proposition}(\cite{froitzheim2016efficient})\label{hyper mink}
Let two axis aligned hyper-rectangles $\mathcal{H}^1(\vec{p}_1,\vec{\nu}_1)$, $\mathcal{H}^2(\vec{p}_2,\vec{\nu}_2)$, then the Minkowski sum 
$\mathcal{H}^3(\vec{p}_3,\vec{\nu}_3)=\mathcal{H}^1(\vec{p}_1,\vec{\nu}_1)\oplus\mathcal{H}^2(\vec{p}_2,\vec{\nu}_2)$ is an axis aligned hyper-rectangle such that $\vec{p}_3 = \vec{p}_1+\vec{p}_2$ and $\vec{\nu}_3 = \vec{\nu}_1+\vec{\nu}_2$.
\end{proposition}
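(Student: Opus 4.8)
The plan is to reduce the $n$-dimensional claim to a one-dimensional statement about intervals, exploiting the fact that the Minkowski sum distributes over Cartesian products taken along the same coordinate splitting.

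First I would recall from Definition~\ref{hyper-rectangle} that each hyper-rectangle is a Cartesian product of one-dimensional intervals, $\mathcal{H}^m(\vec{p}_m,\vec{\nu}_m)=\prod_{k=1}^n I_k^m$ with $I_k^m=[\vec{p}_m[k]-\vec{\nu}_m[k]/2,\ \vec{p}_m[k]+\vec{\nu}_m[k]/2]$ for $m=1,2$. Then I would establish the elementary set identity
\[
\Big(\prod_{k=1}^n I_k^1\Big)\oplus\Big(\prod_{k=1}^n I_k^2\Big)=\prod_{k=1}^n\big(I_k^1\oplus I_k^2\big).
\]
This holds because a point belongs to the left-hand side iff it is a sum $\vec{\zeta}^1+\vec{\zeta}^2$ with $\vec{\zeta}^m\in\prod_k I_k^m$, i.e. iff its $k$-th coordinate equals $\vec{\zeta}^1[k]+\vec{\zeta}^2[k]$ with $\vec{\zeta}^m[k]\in I_k^m$ for every $k$. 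Both inclusions are then immediate: for $\subseteq$, the $k$-th coordinate of a sum lies in $I_k^1\oplus I_k^2$; for $\supseteq$, a point whose $k$-th coordinate has the form $a_k+b_k$ with $a_k\in I_k^1$, $b_k\in I_k^2$ is the sum of the vectors $(a_k)_k\in\prod_k I_k^1$ and $(b_k)_k\in\prod_k I_k^2$ — the coordinates decouple completely, so no coupling constraint survives.

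Next, for a single coordinate I would use monotonicity of addition to get the one-dimensional fact $[\alpha_1,\beta_1]\oplus[\alpha_2,\beta_2]=[\alpha_1+\alpha_2,\ \beta_1+\beta_2]$, so that
\[
I_k^1\oplus I_k^2=\big[(\vec{p}_1[k]+\vec{p}_2[k])-(\vec{\nu}_1[k]+\vec{\nu}_2[k])/2,\ (\vec{p}_1[k]+\vec{p}_2[k])+(\vec{\nu}_1[k]+\vec{\nu}_2[k])/2\big].
\]
Assembling the product over $k=1,\ldots,n$ yields exactly $\mathcal{H}(\vec{p}_1+\vec{p}_2,\ \vec{\nu}_1+\vec{\nu}_2)$; since $\vec{\nu}_1[k]+\vec{\nu}_2[k]\in\mathbb{R}_+$ for each $k$, this is an admissible hyper-rectangle in the sense of Definition~\ref{hyper-rectangle}, with $\vec{p}_3=\vec{p}_1+\vec{p}_2$ and $\vec{\nu}_3=\vec{\nu}_1+\vec{\nu}_2$, which is the claim.

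There is essentially no hard part here; the only point deserving a line of care is the product/Minkowski-sum identity, where one must verify that the coordinate-wise decomposition is genuinely free. As an alternative route one could instead invoke Proposition~\ref{convex combination}: every point of $\mathcal{H}^1\oplus\mathcal{H}^2$ is a sum of convex combinations of the $2^n$ vertices of each rectangle, hence itself a convex combination of the $2^n$ points obtained by adding matching extreme corners, and these are precisely the vertices in $\mathcal{P}(\vec{p}_1+\vec{p}_2,\vec{\nu}_1+\vec{\nu}_2)$; this delivers the same conclusion via convex hulls, but with somewhat more bookkeeping than the product argument.
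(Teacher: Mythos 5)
Your argument is correct and complete. Note that the paper itself does not prove this proposition at all---it is imported verbatim from the cited reference \cite{froitzheim2016efficient}---so there is no in-paper proof to compare against; your coordinate-wise factorization $\bigl(\prod_k I_k^1\bigr)\oplus\bigl(\prod_k I_k^2\bigr)=\prod_k\bigl(I_k^1\oplus I_k^2\bigr)$ followed by the one-dimensional interval sum is the standard, self-contained way to establish it. The only step worth one extra word is the inclusion $[\alpha_1+\alpha_2,\beta_1+\beta_2]\subseteq[\alpha_1,\beta_1]\oplus[\alpha_2,\beta_2]$, which rests on solvability of $c=a+b$ with $a,b$ in the respective intervals (i.e., nonemptiness of $[c-\beta_1,c-\alpha_1]\cap[\alpha_2,\beta_2]$) rather than on monotonicity of addition alone; this is immediate, but it is the substance of that line.
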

\begin{proposition}(\cite{ziegler2012lectures})\label{low inclusion}
Consider a concave scalar-valued function $g:\mathbb{R}^n\rightarrow \mathbb{R}$ and hyper-rectangle $\mathcal{H}(\vec{p},\vec{\nu})$. If $g(\vec{v}_i)\geq0\; \forall \vec{v}_i\in \mathcal{P}(\vec{p},\vec{\nu})$ then $g(\vec{\zeta})\geq0 \;\forall \vec{\zeta}\in\mathcal{H}(\vec{p},\vec{\nu})$.
\end{proposition}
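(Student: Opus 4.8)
The plan is to prove Proposition~\ref{low inclusion} by combining Proposition~\ref{convex combination} with the definition of concavity. The statement says that a concave function $g$ which is nonnegative on all $2^n$ vertices of a hyper-rectangle is nonnegative on the entire hyper-rectangle. This is essentially the observation that a concave function attains its minimum over a polytope at a vertex, specialized to the hyper-rectangle case where the vertex set $\mathcal{P}(\vec{p},\vec{\nu})$ is explicitly characterized.

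First I would take an arbitrary point $\vec{\zeta}\in\mathcal{H}(\vec{p},\vec{\nu})$. By Proposition~\ref{convex combination}, there exist coefficients $\lambda_i$ with $\sum_{i=1}^{2^n}\lambda_i=1$, $0\le\lambda_i\le1$, and vertices $\vec{v}_i\in\mathcal{P}(\vec{p},\vec{\nu})$ such that $\vec{\zeta}=\sum_{i=1}^{2^n}\lambda_i\vec{v}_i$. Then I would invoke the definition of concavity of $g$ (Assumption-style property of $g$ given in the hypothesis), applied to this convex combination, to obtain
\begin{equation*}
g(\vec{\zeta})=g\!\left(\sum_{i=1}^{2^n}\lambda_i\vec{v}_i\right)\ge\sum_{i=1}^{2^n}\lambda_i\, g(\vec{v}_i).
\end{equation*}
Strictly speaking, concavity gives this inequality directly for a convex combination of two points and then extends to any finite convex combination by a routine induction on the number of points (Jensen's inequality for concave functions), so I would either cite this standard fact or include the one-line induction.

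The final step is to use the hypothesis $g(\vec{v}_i)\ge0$ for every vertex $\vec{v}_i\in\mathcal{P}(\vec{p},\vec{\nu})$ together with $\lambda_i\ge0$, which makes each term $\lambda_i\, g(\vec{v}_i)$ nonnegative, hence $\sum_{i=1}^{2^n}\lambda_i\, g(\vec{v}_i)\ge0$, and therefore $g(\vec{\zeta})\ge0$. Since $\vec{\zeta}$ was an arbitrary point of $\mathcal{H}(\vec{p},\vec{\nu})$, this proves the claim. I do not anticipate a genuine obstacle here; the only point requiring mild care is the passage from two-point concavity to the $2^n$-point Jensen inequality, and the fact that Proposition~\ref{convex combination} already supplies the explicit convex-combination representation makes even that step immediate. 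The proof is short and self-contained given the earlier results.
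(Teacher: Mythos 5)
Your proof is correct and follows exactly the route the paper intends: the paper states this result by citation without reproducing a proof, but places Proposition~\ref{convex combination} immediately before it precisely so that the vertex convex-combination representation plus Jensen's inequality for concave functions yields the claim. Nothing is missing; the only care needed is the routine extension of two-point concavity to a finite convex combination, which you correctly flag.
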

If for each task $\bar{\phi}^{\pi_{i}^{j}}_{rs}$ with $(r,s)\in\epsilon(\vec{\pi}_i^j)$ we define a centre $\vec{p}^{\pi_i^j}_{rs}$ and a dimension vector $\vec{\nu}^{\pi_i^j}_{rs}$ as parameters, we can employ the following family of concave predicate functions with respective predicate and super-level set:
\begin{subequations}\label{eq:infty predicate function}
\begin{align}
&\bar{h}^{\pi_i^j}_{rs}(\vec{e}_{rs},\vec{\eta}^{\pi_i^j}_{rs}) :=\overset{\star}{\min}(A(\vec{e}_{rs}-\vec{p}^{\pi_i^j}_{rs})-\vec{b}(\vec{\nu}^{\pi_i^j}_{rs}))\geq0, \label{eq:parameteric predicate function} \\
&\bar{\mathcal{B}}^{\pi_i^j}_{rs}(\vec{\eta}^{\pi_i^j}_{rs}):=\{\vec{e}_{rs}\in \mathbb{X}_r\oplus (-\mathbb{X}_s)|\bar{h}^{\pi_i^j}_{rs}(\vec{e}_{rs},\vec{\eta}^{\pi_i^j}_{rs})\geq0\}, \label{eq:parameteric superlevel set}\\
&\bar{\mu}^{\pi_i^j}_{rs}(\vec{\eta}^{\pi_i^j}_{rs}) := \begin{cases}
\true& \; \text{if} \quad \bar{h}^{\pi_i^j}_{rs}(\vec{e}_{rs},\vec{\eta}^{\pi_i^j}_{rs})\geq 0 \\
\false& \; \text{if} \quad \bar{h}^{\pi_i^j}_{rs}(\vec{e}_{rs},\vec{\eta}^{\pi_i^j}_{rs})< 0,
\end{cases}\label{eq:parameteric predicate}
\end{align}
\end{subequations}
where $\vec{\eta}_{rs}^{\pi_i^j}:=[(\vec{p}_{rs}^{\pi_i^j })^T,(\vec{\nu}_{rs}^{\pi_i^j})^T]^T$ are free parameters. We recall that $\overset{\star}{\min}$ is the element-wise minimum and $\overset{\star}{\min}(A(\vec{\zeta}-\vec{p})-\vec{b}) \geq0\Rightarrow A(\vec{\zeta}-\vec{p})-\vec{b}\geq \vec{0}$ for a given $\vec{\zeta}$. The vector $\vec{\eta}_{rs}^{\pi_i^j}$ is computed as a result of the convex program defined in Theorem \ref{convex optimization theorem}. The parametric set $\bar{\mathcal{B}}^{\pi_i^j}_{rs}$ in \eqref{eq:parameteric superlevel set} is a convex hyper-rectangle by Def. \ref{hyper-rectangle} with volume given by $\prod_{s=1}^n\vec{\nu}^{\pi_i^j}_{rs}[s]$ \cite{farahani2015robust}.
Hyper-rectangles are particularly suitable for the task decomposition due to their efficient vertex representation and Minkowski computation in Proposition \ref{hyper mink} \cite[Sec. 3.6]{farahani2015robust}.
In principle, any zonotope can be used for the decomposition presented as we clarify in Remark \ref{remark multiple formulas}. We can now state our first decomposition result:
\begin{lemma}\label{single formula decomposition lemma}
Consider a task $\phi_{ij}=T_{[a,b]}\mu_{ij}$ with $T\in\{G,F\}$ according to $\eqref{eq:multi agent spec}$, the corresponding predicate function $h_{ij}$ satisfying Assumption \ref{concavity} and $\mathcal{B}_{ij}$ according to \eqref{eq:super level sets}. Further consider a path $\vec{\pi}_i^j$ through the communication graph $\mathcal{G}_c$ and $\phi^{\pi_i^j}=\bigwedge\nolimits_{(r,s)\in \epsilon(\vec{\pi}_{i}^j)}\bar{\phi}_{rs}^{\pi_i^j}$ such that each $\bar{\phi}_{rs}^{\pi_i^j}$ is defined as
\begin{subnumcases}{\label{eq:specific form of phi bar}\bar{\phi}^{\pi_{i}^{j}}_{rs} := }
F_{[a^*,b^*]}\bar{\mu}^{\pi_{i}^{j}}_{rs} \; \text{if} \;\; T=F\label{eq:eventually case}\\
G_{[a^*,b^*]}\bar{\mu}^{\pi_{i}^{j}}_{rs}\; \text{if} \;\;  T=G,\label{eq:always case},
\end{subnumcases}
with $[a^*,b^*]$ defined as
\begin{subnumcases}{\label{eq:time intervals definition}[a^*,b^*] :=}
 \bigl[\bar{t} , \bar{t}\bigr]  \text{ with $\bar{t}\in[a,b]$ if} \; T=F \label{eq:eventually time interval}\\[1ex]
 \bigl[a , b\bigr] \text{ if} \; T=G, \label{eq:always time interval}
\end{subnumcases}
where the predicate $\bar{\mu}^{\pi_{i}^{j}}_{rs}(\vec{\eta}^{\pi_i^j}_{rs})$, predicate function $\bar{h}^{\pi_{i}^{j}}_{rs}(\vec{e}_{rs},\vec{\eta}^{\pi_i^j}_{rs})$ and super-level set $\mathcal{\bar{B}}^{\pi_{i}^{j}}_{rs}(\vec{\eta}^{\pi_i^j}_{rs})$ are as per \eqref{eq:infty predicate function}.
If $\bar{\phi}^{\pi_i^j}_{rs}$ are defined according to \eqref{eq:specific form of phi bar}-\eqref{eq:time intervals definition} and
 \begin{equation}\label{eq:minkosky inclusion}
 \bigoplus_{(r,s)\in\epsilon(\vec{\pi}_i^j)} \bar{\mathcal{B}}^{\pi_{i}^{j}}_{rs}(\vec{\eta}^{\pi_i^j}_{rs})\subseteq \mathcal{B}_{ij};
 \end{equation}
then $\vec{x}(t)\models \phi^{\pi_i^j}\Rightarrow\vec{x}(t)\models \phi_{ij}$.
\end{lemma}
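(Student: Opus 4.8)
The plan is to argue directly from the (qualitative) STL semantics, treating the two cases $T=G$ and $T=F$ separately, and in each case combining the edge‑summation identity $\vec{e}_{ij}=\sum_{(r,s)\in\epsilon(\vec{\pi}_i^j)}\vec{e}_{rs}$ of \eqref{eq:edge sequence} with the Minkowski‑sum inclusion hypothesis \eqref{eq:minkosky inclusion}. First I would isolate the common mechanism: if at some instant $\tau$ every sub‑task predicate holds, i.e. $\vec{e}_{rs}(\tau)\in\bar{\mathcal{B}}^{\pi_i^j}_{rs}(\vec{\eta}^{\pi_i^j}_{rs})$ for all $(r,s)\in\epsilon(\vec{\pi}_i^j)$, then \eqref{eq:edge sequence} gives $\vec{e}_{ij}(\tau)=\sum_{(r,s)}\vec{e}_{rs}(\tau)\in\bigoplus_{(r,s)}\bar{\mathcal{B}}^{\pi_i^j}_{rs}(\vec{\eta}^{\pi_i^j}_{rs})\subseteq\mathcal{B}_{ij}$ by \eqref{eq:minkosky inclusion}, hence $h_{ij}(\vec{e}_{ij}(\tau))\geq 0$, that is $\vec{x}(\tau)\models\mu_{ij}$ (membership of each $\vec{e}_{rs}(\tau)$ in $\mathbb{X}_r\oplus(-\mathbb{X}_s)$ and of $\vec{e}_{ij}(\tau)$ in $\mathbb{X}_i\oplus(-\mathbb{X}_j)$ being automatic since the trajectory stays in the state sets).

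Next I would assume $\vec{x}(t)\models\phi^{\pi_i^j}$ and unfold $\phi^{\pi_i^j}=\bigwedge_{(r,s)}\bar{\phi}^{\pi_i^j}_{rs}$ via the semantics of $\wedge$. In the case $T=G$, each $\bar\phi^{\pi_i^j}_{rs}=G_{[a,b]}\bar\mu^{\pi_i^j}_{rs}$, so by the semantics of $G$ we obtain, for every $(r,s)$ and every $\tau\in[t+a,t+b]$, that $\vec{e}_{rs}(\tau)\in\bar{\mathcal{B}}^{\pi_i^j}_{rs}(\vec{\eta}^{\pi_i^j}_{rs})$; fixing an arbitrary such $\tau$ and applying the mechanism above yields $\vec{x}(\tau)\models\mu_{ij}$, and since $\tau\in[t+a,t+b]$ was arbitrary we conclude $\vec{x}(t)\models G_{[a,b]}\mu_{ij}=\phi_{ij}$. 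In the case $T=F$, each $\bar\phi^{\pi_i^j}_{rs}=F_{[\bar t,\bar t]}\bar\mu^{\pi_i^j}_{rs}$ with a single $\bar t\in[a,b]$ shared by all edges, so the degenerate interval pins every sub‑task to the common instant $t+\bar t$; the mechanism then gives $\vec{x}(t+\bar t)\models\mu_{ij}$, and since $\bar t\in[a,b]$ this is $\vec{x}(t)\models F_{[a,b]}\mu_{ij}=\phi_{ij}$.

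I expect the only delicate point to be the $F$ case, and more precisely the reason for collapsing the eventually interval to $[\bar t,\bar t]$ in \eqref{eq:eventually time interval}: the identity \eqref{eq:edge sequence} holds instant by instant, so we need all the $\vec{e}_{rs}$ to lie in their respective hyper‑rectangles at one and the same time — a non‑degenerate shared window for the sub‑task eventualities would permit them to be satisfied at different instants, breaking the sum and the conclusion. Everything else is a routine unfolding of the semantics. I would also remark that Assumption \ref{concavity} and the hyper‑rectangle structure of the sets $\bar{\mathcal{B}}^{\pi_i^j}_{rs}$ are not used in this lemma beyond giving meaning to \eqref{eq:minkosky inclusion}; they become essential only when \eqref{eq:minkosky inclusion} is recast as the tractable convex program of Theorem \ref{convex optimization theorem}.
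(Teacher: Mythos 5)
Your proof is correct and follows essentially the same route as the paper's: the edge-summation identity \eqref{eq:edge sequence} combined with the Minkowski inclusion \eqref{eq:minkosky inclusion} to transfer membership from the sub-task super-level sets to $\mathcal{B}_{ij}$, with the degenerate interval $[\bar t,\bar t]$ doing exactly the synchronizing work you identify as the delicate point. The only cosmetic differences are that the paper phrases the argument through the robust semantics $\rho^{(\cdot)}(\vec{x},0)>0$ and writes out only the $F$ case (deferring $G$ as analogous), whereas you treat both cases and correctly observe that Assumption \ref{concavity} is not needed for the lemma itself but only to make \eqref{eq:minkosky inclusion} checkable on the vertices later.
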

\begin{proof}
We prove the lemma for  $\phi_{ij}:=F_{[a,b]}\mu_{ij}$ while the case of $\phi_{ij}:=G_{[a,b]}\mu_{ij}$ follows a similar procedure. We omit the dependency of $\bar{h}_{rs}^{\pi_i^j},\mathcal{B}_{rs}^{\pi_i^j}$ from $\vec{\eta}_{rs}^{\pi_i^j}$ to reduce the burden of notation. Given the path $\vec{\pi}_i^j$ over $\mathcal{G}_c$ we define $\phi^{\pi_i^j}=\bigwedge\nolimits_{(r,s)\in \epsilon(\vec{\pi}_{i}^j)}\bar{\phi}_{rs}^{\pi_i^j}=\bigwedge_{(r,s)\in\epsilon(\vec{\pi}_i^j)}F_{[\bar{t},\bar{t}]}\bar{\mu}^{\pi_i^j}_{rs}$ according to \eqref{eq:specific form of phi bar}-\eqref{eq:time intervals definition}, where $[\bar{t},\bar{t}]\subseteq[a,b]$. It is known that $\vec{x}(t)\models\phi^{\pi_i^j} \Rightarrow \rho^{\phi^{\pi_i^j}}(\vec{x},0)=\min_{(r,s)\in\epsilon(\vec{\pi}_i^j)}\{\rho^{\bar{\phi}_{rs}^{\pi_i^j}}(\vec{x},0)\}>0$. By definition of robust semantics for the $F$ operator we know that $\rho^{\bar{\phi}_{rs}^{\pi_i^j}}(\vec{x},0)>0 \Rightarrow \exists t_{rs}\in[\bar{t},\bar{t}] \, \text{such that}\; \bar{h}^{\pi_i^j}_{rs}(\vec{e}_{rs}(t_{rs}))> 0 \;\forall (r,s)\in \epsilon(\pi_i^j)$. Since the interval $[\bar{t},\bar{t}]$ only contains $\bar{t}$, then $t_{rs} = \bar{t}\; \forall (r,s)\in \epsilon(\vec{\pi}_i^j)$. Hence $\rho^{\bar{\phi}_{rs}^{\pi_i^j}}(\vec{x},0)>0\Rightarrow \bar{h}^{\pi_i^j}_{rs}(\vec{e}_{rs}(\bar{t}))> 0 \Rightarrow \vec{e}_{rs}(\bar{t})\underset{\text{\eqref{eq:parameteric superlevel set}}}{\in}\mathcal{\bar{B}}^{\pi_i^j}_{rs}\;\forall (r,s)\in \epsilon(\pi_i^j)$. From \eqref{eq:edge sequence} and \eqref{eq:minkosky inclusion} we have $\vec{e}_{ij}(\bar{t})\underset{\text{\eqref{eq:edge sequence}}}{=}\sum_{(r,s)\in \epsilon(\vec{\pi}_i^j)}\vec{e}_{rs}(\bar{t})\in \bigoplus_{(r,s)\in \epsilon(\vec{\pi}_i^j)} \bar{\mathcal{B}}^{\pi_{i}^{j}}_{rs} \underset{\text{\eqref{eq:minkosky inclusion}}}{\subseteq} \mathcal{B}_{ij}\; \Rightarrow \vec{e}_{ij}(\bar{t})\in\mathcal{B}_{ij} ]\Rightarrow h_{ij}(\vec{e}_{ij}(\bar{t}))>0$. We thus arrived at the conclusion since $\bar{t}\in [a,b]$ and the robust semantics for the $F$ operator we have $\rho^{\phi_{ij}}(\vec{x},0)=\max_{t\in[a,b]} (h_{ij}(\vec{e}_{ij}(t)))>0 \Rightarrow \vec{x}(t)\models \phi_{ij}$.
\end{proof}
We highlight that $\bigoplus_{(r,s)\in \epsilon(\vec{\pi}_i^j)} \bar{\mathcal{B}}^{\pi_{i}^{j}}_{rs}$ is an axis-aligned hyper-rectangles according to Prop. \ref{hyper mink}. Moreover, from Assumption \ref{concavity}, each $h_{ij}$ is concave, such that satisfying \eqref{eq:minkosky inclusion} consists in verifying the $2^n$ convex relations $-h_{ij}(\vec{v})\leq0$ over the vertices $\vec{v}\in\mathcal{P}(\vec{p}^{\pi_i^j},\vec{\nu}^{\pi_i^j})$, where $\vec{p}^{\pi_i^j}=\sum_{(r,s)\in\epsilon(\vec{\pi}_i^j)} \vec{p}^{\pi_i^j}_{rs}$ and $\vec{\nu}^{\pi_i^j}=\sum_{(r,s)\in\epsilon(\vec{\pi}_i^j)} \vec{\nu}^{\pi_i^j}_{rs}$ as per Prop. \ref{hyper mink}.

\subsection{Computing optimal parameters}
In Lemma \ref{single formula decomposition lemma} we showed that \eqref{eq:specific form of phi bar}, \eqref{eq:time intervals definition} and \eqref{eq:minkosky inclusion} imply $\vec{x}(t)\models \phi^{\pi_i^j}\Rightarrow \vec{x}(t)\models \phi_{ij}$. In Thm. \ref{convex optimization theorem}, we present a procedure to compute the parameters $\vec{\eta}^{\pi_i^j}_{rs}$ for each task $\bar{\phi}^{\pi_i^j}_{rs}$ such that the \eqref{eq:specific form of phi bar}, \eqref{eq:time intervals definition} and \eqref{eq:minkosky inclusion} are satisfied and the volumes of the super-level set $\bar{\mathcal{B}}^{\pi_i^j}_{rs}$ are maximized. Namely, given a single hyper-rectangle $\bar{\mathcal{B}}^{\pi_i^j}_{rs}$, it is possible to maximize its volume $\prod_{k=1}^n\vec{\nu}_{rs}^{\pi_i^j}[k]$, by minimizing  $(\prod_{k=1}^n\vec{\nu}_{rs}^{\pi_i^j}[k])^{-1}$, which is a convex function since $\vec{\nu}_{rs}^{\pi_i^j}[k]>0 \; \forall k=1,\ldots n$. For convenience, we introduce the set $\Theta^{\pi_i^j}:=\{\vec{\eta}^{\pi_i^j}_{rs} | \forall (r,s)\in \epsilon(\pi_{i}^j)\}$ which gathers all the parameter along a path $\vec{\pi}_i^j$ applied for the decomposition of a given task $\phi_{ij}$ and $\Theta=\bigcup_{(i,j)\in \mathcal{E}_{\psi}\setminus \mathcal{E}_c}\Theta^{\pi_i^j}$ is then the set that of all the parameters applied for the decomposition. We are now ready to present our second result.

\begin{theorem}\label{convex optimization theorem}
    Consider a multi-agent system with agents in $\mathcal{V}$ and subject to a global task $\psi$ according to \eqref{eq:working fragment}. Further consider the associated task and communication graphs $\mathcal{G}_c$, $\mathcal{G}_\psi$ such that $\mathcal{E}_\psi \setminus\mathcal{E}_c\neq \emptyset$ and $\mathcal{G}_c$ respects Assumption \ref{connettivity assumption}. For all $(i,j)\in \mathcal{E}_\psi \setminus\mathcal{E}_c$ consider the paths $\vec{\pi}_i^j$  in $\mathcal{G}_c$ and the corresponding task $\phi^{\pi_i^j}$ satisfying the conditions in Lemma \ref{convex optimization theorem}. Define the following convex optimization problem 
    \begin{subequations}\label{eq:convex optimization problem}
    \begin{align}
    \min_{\vec{\eta}_{rs}^{\pi_i^j}\in\Theta} \;  \sum_{(i,j)\in \mathcal{E}_{\psi}\setminus\mathcal{E}_{c}}\sum_{(r,s)\in \epsilon(\pi_i^j)} \Bigl(\prod_{k=1}^{n}\vec{\nu}^{\pi_i^j}_{rs}[k]\Bigr)^{-1}
   \\
   \bigoplus_{(r,s)\in \epsilon(\vec{\pi}_{i}^j)} \bar{\mathcal{B}}^{\pi_{i}^{j}}_{rs}(\vec{\eta}^{\pi_{i}^{j}}_{rs})\subseteq \mathcal{B}_{ij} \; \forall (i,j)\in \mathcal{E}_\psi\setminus \mathcal{E}_c,  \label{eq:super level set inclusion constraint}
    \end{align}
    \end{subequations}
Assuming feasibility of \eqref{eq:convex optimization problem} and that there exists $\vec{x}(t)$ such that $\vec{x}(t)\models \bar{\psi}$ with $\bar{\psi}$ according to \eqref{eq:global specification rewritten}, then $\vec{x}(t)\models \psi$.
 \end{theorem}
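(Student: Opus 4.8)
The plan is to reduce $\vec{x}(t)\models\bar{\psi}\Rightarrow\vec{x}(t)\models\psi$ to a conjunct-by-conjunct check, using that the robust semantics of $\wedge$ is a minimum (so, following the same convention as in the proof of Lemma~\ref{single formula decomposition lemma}, satisfaction of a conjunction is equivalent to positive robustness of every conjunct). First I would line up the conjunctive structure of $\bar{\psi}$ in \eqref{eq:global specification rewritten} against that of $\psi$ in \eqref{eq:global specification}. The conjuncts $\phi_i$ (independent tasks) and $\phi_{ij}$ with $j\in\mathcal{N}_\psi(i)\cap\mathcal{N}_c(i)$ appear verbatim in both $\psi$ and $\bar{\psi}$; since $\vec{x}(t)\models\bar{\psi}$ gives $\rho^{\bar{\psi}}(\vec{x},0)>0$ and hence positive robustness of each of these, the trajectory satisfies them directly. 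What remains is to recover $\phi_{ij}$ for every $(i,j)\in\mathcal{E}_\psi\setminus\mathcal{E}_c$, for which $\bar{\psi}$ instead contains the path task $\phi^{\pi_i^j}$ (a conjunct that $\vec{x}(t)$ therefore satisfies).

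The second step is to apply Lemma~\ref{single formula decomposition lemma} edge by edge. Fix $(i,j)\in\mathcal{E}_\psi\setminus\mathcal{E}_c$; by Assumption~\ref{connettivity assumption} a path $\vec{\pi}_i^j$ in $\mathcal{G}_c$ exists, and by construction the associated $\phi^{\pi_i^j}=\bigwedge_{(r,s)\in\epsilon(\vec{\pi}_i^j)}\bar{\phi}_{rs}^{\pi_i^j}$ has each $\bar{\phi}_{rs}^{\pi_i^j}$ of the form \eqref{eq:specific form of phi bar} with time interval \eqref{eq:time intervals definition}, while $h_{ij}$ is concave by Assumption~\ref{concavity}. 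The only remaining hypothesis of Lemma~\ref{single formula decomposition lemma} is the Minkowski inclusion \eqref{eq:minkosky inclusion}, which is \emph{exactly} constraint \eqref{eq:super level set inclusion constraint} of the convex program \eqref{eq:convex optimization problem}; under the assumed feasibility of \eqref{eq:convex optimization problem}, the returned parameters $\vec{\eta}^{\pi_i^j}_{rs}\in\Theta$ satisfy \eqref{eq:super level set inclusion constraint} for all $(i,j)\in\mathcal{E}_\psi\setminus\mathcal{E}_c$ simultaneously. Hence Lemma~\ref{single formula decomposition lemma} yields $\vec{x}(t)\models\phi^{\pi_i^j}\Rightarrow\vec{x}(t)\models\phi_{ij}$, and combining with the first step gives $\vec{x}(t)\models\phi_{ij}$.

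Putting the two steps together, $\vec{x}(t)$ satisfies every conjunct of $\psi$, so $\rho^{\psi}(\vec{x},0)=\min\{\cdots\}>0$ and $\vec{x}(t)\models\psi$. Two bookkeeping points close the argument. First, when a collaborative task is itself a conjunction $\phi_{ij}=\bigwedge_k\phi_{ij}^k$, Lemma~\ref{single formula decomposition lemma} is applied to each $\phi_{ij}^k$ (so $\phi^{\pi_i^j}=\bigwedge_k(\phi^{\pi_i^j})^k$), the objective and constraints of \eqref{eq:convex optimization problem} range over all such pieces, and a one-line induction on $\wedge$ recovers each $\phi_{ij}^k$, hence $\phi_{ij}$. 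Second, by Assumption~\ref{task symmetry} an edge $(i,j)\in\mathcal{E}_\psi\setminus\mathcal{E}_c$ carries the same task in both orientations ($\phi_{ij}=\phi_{ji}$) and $\vec{\pi}_j^i$ may be taken as the reverse of $\vec{\pi}_i^j$, so the two orientations produce the same family of sub-tasks and no inconsistency arises. I expect the only delicate part to be precisely this bookkeeping --- matching the conjunctive structure of $\bar{\psi}$ to that of $\psi$ and checking that feasibility of the \emph{single} program \eqref{eq:convex optimization problem} supplies the inclusion hypothesis of Lemma~\ref{single formula decomposition lemma} for \emph{all} decomposed edges at once; everything else is a direct invocation of the lemma together with the min-semantics of conjunction.
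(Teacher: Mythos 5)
Your proposal is correct and follows essentially the same route as the paper's proof: feasibility of \eqref{eq:convex optimization problem} supplies the Minkowski inclusion \eqref{eq:minkosky inclusion} for every $(i,j)\in\mathcal{E}_\psi\setminus\mathcal{E}_c$, Lemma \ref{single formula decomposition lemma} then converts $\vec{x}(t)\models\phi^{\pi_i^j}$ into $\vec{x}(t)\models\phi_{ij}$, and the min-based robust semantics of conjunction closes the argument over all conjuncts of $\psi$. The extra bookkeeping you add (conjunctions inside $\phi_{ij}$ and task symmetry) is consistent with how the paper handles those points in its remarks.
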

\begin{proof}
Given that a solution to \eqref{eq:convex optimization problem} exists, then the satisfaction of the constraints set \eqref{eq:super level set inclusion constraint} implies that condition \eqref{eq:minkosky inclusion} is satisfied for all $(i,j)\in\mathcal{E}_{\psi}\setminus\mathcal{E}_{c}$. The conditions from Lemma 1 are then satisfied for all $\phi^{\pi_i^j}$ with $(i,j)\in\mathcal{E}_{\psi}\setminus\mathcal{E}_{c}$. We now analyse the satisfaction of $\psi$ and $\bar{\psi}$ through the definition of the robust semantics such that
$
\rho^{\psi}= \min \{\min\limits_{i\in\mathcal{V}}\{\rho^{\phi_i}\},\min_{(i,j)\in\mathcal{E}_\psi}\{\rho^{\phi_{ij}}\}\}
$ and
$
\rho^{\bar{\psi}} = \min \{\min\limits_{i\in\mathcal{V}}\{\rho^{\phi_i}\},\min_{(i,j)\in\mathcal{E}_\psi\cap\mathcal{E}_c}\{\rho^{\phi_{ij}}\}\min_{(i,j)\in\mathcal{E}_\psi\setminus\mathcal{E}_c}\{\rho^{\phi^{\pi_i^j}}\}\};$ where we have omitted the dependency from $(\vec{x},0)$. 
Assuming that $\vec{x}(t)\models\bar{\psi}$, then $\rho^{\phi_{ij}}(\vec{x},0)>0 \, \forall (i,j)\in \mathcal{E}_{\psi}\cap\mathcal{E}_c$ and $\rho^{\phi_{i}}(\vec{x},0)>0$. Furthermore, we know from Lemma 1 that $\rho^{\phi^{\pi_i^j}}(\bm{x},0)>0\Rightarrow\rho^{\phi_{ij}}(\bm{x},0)>0\; \forall (i,j)\in \mathcal{E}_{\psi}\setminus\mathcal{E}_{c}$ and eventually $\rho^{\phi_{ij}}(\vec{x},0)>0 \; \forall (i,j)\in \mathcal{E}_\psi \Rightarrow \vec{x}(t)\models \bar{\psi}\Rightarrow \vec{x}(t)\models \psi$. 
\end{proof}
If we define the set of edges involved in the decomposition of $\psi$ as $\mathcal{Q}:=\bigcup_{(i,j)\in\mathcal{E}_{\psi}\setminus\mathcal{E}_c}\epsilon(\vec{\pi}_i^j)$ then we can write $\mathcal{G}_{\bar{\psi}}$ as a function of $\mathcal{G}_{\psi}$ as
$\mathcal{G}_{\bar{\psi}} = \text{\textbf{add}}\bigl(\text{\textbf{del}}(\mathcal{G}_\psi,\mathcal{E}_\psi\setminus\mathcal{E}_c),\mathcal{Q}\bigr),$
which correspond to deleting all the edges in $\mathcal{G}_\psi$ not corresponding to an edge in $\mathcal{G}_c$, while we add all the edges from $\mathcal{G}_c$ that are introduced by the paths $\vec{\pi}_i^j$ during the decomposition. We have thus deduced a procedure that solves Problem \ref{problem1}.
\begin{remark}\label{remark multiple formulas}
    Problem \eqref{eq:convex optimization problem} handles cases in which $\phi_{ij}$ has conjunctions. Indeed, if $\phi_{ij}=\land_{k=1}^p\phi_{ij}^k$ for some $p\geq1$ then we define a task $(\phi^{\pi_i^j})^k$ as per Lemma \ref{single formula decomposition lemma} and a constraint as per \eqref{eq:super level set inclusion constraint} for each $k=1,\ldots p$ to be introduced in \eqref{eq:convex optimization problem}.
\end{remark}
\begin{remark}\label{remark multiple formulas}
    In principle, any type of zonotope can be employed for the decomposition approach developed. However, we remark that the cost of imposing constraint \eqref{eq:super level set inclusion constraint} increases with the number of vertices defining the Minkowski sum in the left-hand side of  \eqref{eq:super level set inclusion constraint}. 
\end{remark}
We outline that although \eqref{eq:convex optimization problem} might yield a solution, there is no guarantee that the tasks $\phi^{\pi_i^j}$ computed from \eqref{eq:convex optimization problem} in Thm. \ref{convex optimization theorem} can be satisfied in conjunction with each other or that they can be satisfied in conjunction with the un-decomposed formulas $\phi_{ij}\;\forall (i,j)\in \mathcal{E}_{\psi}\cap\mathcal{E}_c$. In other words, the new task $\bar{\psi}$ obtained from the optimization problem presented in Thm. \ref{convex optimization theorem} is not guaranteed to be satisfiable. We analyse this problem in the next section.

\section{Dealing with conflicts}\label{conflicts}
\subsection{Conflicting conjuncitons}
We consider the following notion of conflicting conjunction for formulas defined by the STL fragment \eqref{eq:working fragment}:
\begin{definition}(Conflicting conjunction)\label{conflicting conjunction}
A conjunction of formulas $\bigwedge_{k}\phi^{k}_{ij}$, where $\phi^{k}_{ij}$ is according to  \eqref{eq:multi agent spec}, is a \textit{conflicting conjunction} if there does not exist a state trajectory $\vec{x}(t)$ for the multi-agent system such that $\vec{x}(t)\models \bigwedge_{k}\phi^k_{ij}$.
\end{definition}
In this section, we first state the four types of conflicting conjunctions that we consider for fragment in \eqref{eq:working fragment} and we conjecture that these are the only 4 possible types. Second, we provide sufficient conditions that can be enforced as convex constraints in Theorem \ref{convex optimization theorem} such that the obtained $\bar{\psi}$ does not suffer from conflicting conjunctions. We assume that for the original global task $\psi$, none of these conflicts arise, while they could arise due to the introduction of the new formulas $\phi^{\pi_i^j}$ during the decomposition process outlined in Thm. \ref{convex optimization theorem}. For ease of notation, in this section, we drop the notation $\bar{\phi}^{\pi_i^j}_{rs}$ and we rewrite all the tasks $\phi_{ij}$ in $\mathcal{G}_{\bar{\psi}}$ for a single edge $(i,j)\in \mathcal{E}_{\bar{\psi}}$ as
\begin{subequations}
\begin{align}\label{eq: splitted specification}
\phi_{ij}=\bigwedge_{k\in \mathcal{I}^g_{ij}}\phi^k_{ij} \land \bigwedge_{k\in \mathcal{I}^f_{ij}}\phi^k_{ij}
\\ \label{eq: always splitted specification}
\phi^k_{ij}=G_{[a^k,b^k]}\mu^k_{ij} \;\text{if $k\in  \mathcal{I}^g_{ij}$}, 
\\ \label{eq: eventually splitted specification}
\phi^k_{ij}=F_{[a^k,b^k]}\mu^k_{ij} \quad \text{if $k\in  \mathcal{I}^f_{ij}$}.
\end{align}
\end{subequations}
such that $\mathcal{I}_{ij}:=\mathcal{I}^g_{ij}\cup \mathcal{I}^f_{ij}$. Differently from \eqref{eq:multi agent spec}, we highlight that after the decomposition each edge $(i,j)\in \mathcal{E}_{\bar{\psi}}$ can contain parametric tasks according to \eqref{eq:specific form of phi bar} and un-decomposed tasks from $\psi$ as per \eqref{eq:multi agent spec} that are directly inherited by $\bar{\psi}$. In order to differentiate between the formers and the latters, we introduce the sets $\bar{\mathcal{F}}$ and $\mathcal{F}$ such that $\phi^k_{ij}\in \bar{\mathcal{F}}$ if $\phi^k_{ij}$ is a task according to \eqref{eq:specific form of phi bar}, while  $\phi^k_{ij}\in \mathcal{F}$ if $\phi^k_{ij}$ is defined according to \eqref{eq:multi agent spec}. For any task $\phi^k_{ij}$ in \eqref{eq: always splitted specification}-\eqref{eq: eventually splitted specification} we define  the associated predicate function $h^k_{ij}(\vec{e}_{ij})$, predicate $\mu^k_{ij}(\vec{e}_{ij}):=\bigl\{\begin{smallmatrix*}[r]
\true& \; \text{ if} \quad h^k_{ij}(\vec{e}_{ij},\vec{\eta}_{ij})\geq 0 \\
\false& \; \text{ if} \quad h^k_{ij}(\vec{e}_{ij},\vec{\eta}_{ij})< 0 ,
\end{smallmatrix*} \label{eq:parameteric predicate}$ and super-level set $\mathcal{B}^k_{ij}=\{\vec{e}_{ij}|h^k_{ij}(\vec{e}_{ij})\geq0\}$, which is a parametric hyper-rectangle if $\phi^k_{ij}\in \mathcal{\bar{F}}$ as per \eqref{eq:parameteric superlevel set}. We now present the four types of conflicting conjunctions we consider. 
\begin{Fact}(Conflict of type 1)\label{conflict 1}
Consider two tasks $\phi^k_{ij}$ and $\phi^q_{ij}$ defined over the edge $(i,j)\in\mathcal{E}_{\bar{\psi}}$  and such that $k,q\in\mathcal{I}_{ij}^g$
If $[a^q,b^q]\cap[a^k,b^k]\neq\emptyset$ and $\mathcal{B}_{ij}^{k}\cap\mathcal{B}_{ij}^{q}=\emptyset$ then $\phi^k_{ij}\land\phi^q_{ij}$ is a conflicting conjunction.
\end{Fact}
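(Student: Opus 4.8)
The plan is to prove the contrapositive form directly: assume there exists a state trajectory $\vec{x}(t)$ with $\vec{x}(t)\models \phi^k_{ij}\land\phi^q_{ij}$ and derive a contradiction with $\mathcal{B}_{ij}^{k}\cap\mathcal{B}_{ij}^{q}=\emptyset$. First I would unfold the robust semantics for the conjunction and the two \emph{always} operators: $\vec{x}(t)\models\phi^k_{ij}\land\phi^q_{ij}$ gives $\rho^{\phi^k_{ij}}(\vec{x},0)>0$ and $\rho^{\phi^q_{ij}}(\vec{x},0)>0$, which by the semantics of $G_{[a^k,b^k]}$ and $G_{[a^q,b^q]}$ means $h^k_{ij}(\vec{e}_{ij}(t))>0$ for all $t\in[a^k,b^k]$ and $h^q_{ij}(\vec{e}_{ij}(t))>0$ for all $t\in[a^q,b^q]$.

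The key step is to exploit the hypothesis $[a^q,b^q]\cap[a^k,b^k]\neq\emptyset$: pick any $t^\star$ in this nonempty intersection. Then $t^\star$ lies in both $[a^k,b^k]$ and $[a^q,b^q]$, so simultaneously $h^k_{ij}(\vec{e}_{ij}(t^\star))>0$ and $h^q_{ij}(\vec{e}_{ij}(t^\star))>0$. By the definitions of the super-level sets $\mathcal{B}^k_{ij}=\{\vec{e}_{ij}\mid h^k_{ij}(\vec{e}_{ij})\geq0\}$ and $\mathcal{B}^q_{ij}$, this places the single vector $\vec{e}_{ij}(t^\star)$ in $\mathcal{B}^k_{ij}\cap\mathcal{B}^q_{ij}$, so that intersection is nonempty, contradicting $\mathcal{B}_{ij}^{k}\cap\mathcal{B}_{ij}^{q}=\emptyset$. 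Hence no such trajectory exists, and by Definition \ref{conflicting conjunction} the conjunction $\phi^k_{ij}\land\phi^q_{ij}$ is conflicting.

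I do not expect any serious obstacle here; the statement is essentially a direct consequence of the fact that an \emph{always} operator forces the relative state into the super-level set over the whole interval, and two overlapping \emph{always} intervals force the state into both super-level sets at a common time instant. The only mild subtlety worth a sentence is that one should use the strict-versus-nonstrict inequality carefully — the robust semantics yield $h^k_{ij}(\vec{e}_{ij}(t^\star))>0$, which certainly implies $\vec{e}_{ij}(t^\star)\in\mathcal{B}^k_{ij}$ since membership only requires $h^k_{ij}(\vec{e}_{ij})\geq0$ — but no continuity or compactness argument is needed. Note that this argument is agnostic to whether $\phi^k_{ij},\phi^q_{ij}$ lie in $\mathcal{F}$ or $\bar{\mathcal{F}}$, so it covers both inherited and parametric always-tasks uniformly.
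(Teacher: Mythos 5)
Your proof is correct and follows essentially the same argument as the paper: unfold the robust semantics of the conjunction and the two \emph{always} operators, pick a time instant in the nonempty intersection of the two intervals, and observe that the relative state at that instant would have to lie in both super-level sets, contradicting their empty intersection. The extra remarks on strict versus nonstrict inequalities and on $\mathcal{F}$ versus $\bar{\mathcal{F}}$ are harmless additions that do not change the argument.
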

\begin{proof}
We prove the fact by contradiction. Suppose there exists $\vec{x}(t)$ such that $\vec{x}(t)\models\phi^q_{ij}\land\phi^k_{ij}$. Then $\rho^{\phi^q_{ij}\land\phi^k_{ij}}(\vec{x},0)= \min \{\rho^{\phi^q_{ij}}(\vec{x},0),\rho^{\phi^k_{ij}}(\vec{x},0)\}>0 \Rightarrow \rho^{\phi^q_{ij}}
    (\vec{x},0)>0 \land  \rho^{\phi^k_{ij}}(\vec{x},0) > 0$. Recalling the definition of robust semantics for the $G$ operator we have $h^q_{ij}(\vec{e}_{ij}(t))>0 \; \forall t\in[a^q,b^q]\Rightarrow \vec{e}_{ij}(t)\in \mathcal{B}^q_{ij} \; \forall t\in[a^q,b^q]$ and $h^k_{ij}(\vec{e}_{ij}(t))>0 \; \forall t\in[a^k,b^k]\Rightarrow \vec{e}_{ij}(t)\in \mathcal{B}^k_{ij}\; \forall t\in[a^k,b^k]$. Since $[a^k,b^k]\cap [a^q,b^q] \neq \emptyset$  then there exists $\bar{t}\in [a^k,b^k]\cap [a^q,b^q]$ for which $\vec{e}_{ij}(\bar{t})\in \mathcal{B}^k_{ij} \land \vec{e}_{ij}(\bar{t})\in \mathcal{B}^q_{ij}$. We thus arrived at a contradiction since $\mathcal{B}^k_{ij}\cap \mathcal{B}^q_{ij}=\emptyset$.
\end{proof}
\begin{Fact}(Conflict of type 2)\label{conflict 2}
Consider two tasks $\phi^k_{ij}$ and $\phi^q_{ij}$ defined over the edge $(i,j)\in\mathcal{E}_{\bar{\psi}}$ and such that $k\in\mathcal{I}_{ij}^g$, $q\in\mathcal{I}_{ij}^f$.
If $[a^q,b^q]\subseteq[a^k,b^k]$ and $\mathcal{B}^q_{ij}\cap\mathcal{B}_{ij}^{k}=\emptyset$ then $\phi^k_{ij}\land\phi^q_{ij}$ is a conflicting conjunction.
\end{Fact}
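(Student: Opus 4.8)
The plan is to mirror the contradiction argument already used for Conflict of type 1, adapting it to the asymmetry between the always-task $\phi^k_{ij} = G_{[a^k,b^k]}\mu^k_{ij}$ and the eventually-task $\phi^q_{ij} = F_{[a^q,b^q]}\mu^q_{ij}$. First I would suppose, for contradiction, that there exists a state trajectory $\vec{x}(t)$ with $\vec{x}(t)\models \phi^k_{ij}\wedge\phi^q_{ij}$, so that by the robust semantics for conjunction $\rho^{\phi^k_{ij}}(\vec{x},0)>0$ and $\rho^{\phi^q_{ij}}(\vec{x},0)>0$ simultaneously.

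Next I would unpack each robustness inequality through its temporal operator. From $\rho^{\phi^k_{ij}}(\vec{x},0) = \min_{t\in[a^k,b^k]} h^k_{ij}(\vec{e}_{ij}(t)) > 0$ I get $\vec{e}_{ij}(t)\in\mathcal{B}^k_{ij}$ for \emph{every} $t\in[a^k,b^k]$. From $\rho^{\phi^q_{ij}}(\vec{x},0) = \max_{t\in[a^q,b^q]} h^q_{ij}(\vec{e}_{ij}(t)) > 0$ I get that there \emph{exists} some $t^\star\in[a^q,b^q]$ with $\vec{e}_{ij}(t^\star)\in\mathcal{B}^q_{ij}$. The hypothesis $[a^q,b^q]\subseteq[a^k,b^k]$ is exactly what is needed to conclude that this same $t^\star$ lies in $[a^k,b^k]$, hence $\vec{e}_{ij}(t^\star)\in\mathcal{B}^k_{ij}$ as well. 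Therefore $\vec{e}_{ij}(t^\star)\in\mathcal{B}^k_{ij}\cap\mathcal{B}^q_{ij}$, contradicting the assumption $\mathcal{B}^q_{ij}\cap\mathcal{B}^k_{ij}=\emptyset$, and the conjunction is conflicting by Definition \ref{conflicting conjunction}.

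I do not anticipate a genuine obstacle here — the proof is a routine application of the quantitative semantics. The only point requiring mild care is the direction in which the interval containment is used: because the $F$-task only guarantees satisfaction at a single (existentially quantified) time instant, we need that instant to be forced inside the $G$-task's interval, which is why the containment must be $[a^q,b^q]\subseteq[a^k,b^k]$ rather than merely a nonempty intersection (the latter would not let us pin the witness time $t^\star$ inside $[a^k,b^k]$). Once that is observed, the chain of implications closes immediately and the emptiness of the set intersection delivers the contradiction.
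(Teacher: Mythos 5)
Your proof is correct and follows essentially the same contradiction argument as the paper: unpack the conjunction's robustness, use the $G$-semantics to constrain $\vec{e}_{ij}$ on all of $[a^k,b^k]$, use the $F$-semantics to produce a witness time in $[a^q,b^q]$, and invoke the interval containment to land that witness in both super-level sets, contradicting their empty intersection. In fact your write-up applies the indices more consistently with the Fact's statement than the paper's own proof, which swaps the roles of $k$ and $q$ and states the containment as $[a^k,b^k]\subset[a^q,b^q]$.
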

\begin{proof}
 We prove the fact by contradiction. Suppose there exists $\vec{x}(t)$ such that $\vec{x}(t)\models\phi^q_{ij}\land\phi^k_{ij}$. Then  $\rho^{\phi^q_{ij}\land\phi^k_{ij}}(\vec{x},0)= \min \{\rho^{\phi^q_{ij}}(\vec{x},0),\rho^{\phi^k_{ij}}(\vec{x},0)\}>0 \Rightarrow \rho^{\phi^q_{ij}}
(\vec{x},0)>0 \land  \rho^{\phi^k_{ij}}(\vec{x},0)> 0$. Recalling the definition of robust semantics for the $F$ and $G$ operators we have $h^q_{ij}(\vec{e}_{ij}(t))>0 \; \forall t\in[a^q,b^q]\Rightarrow \vec{e}_{ij}(t)\in \mathcal{B}^q_{ij} \; \forall t\in[a^q,b^q]$ and there exist $\bar{t}\in[a^k,b^k]$ such that $h^k_{ij}(\vec{e}_{ij}(\bar{t}))>0 \Rightarrow \vec{e}_{ij}(\bar{t})\in \mathcal{B}^k_{ij}$. Since $[a^k,b^k]\subset [a^q,b^q]$ then $\bar{t}\in[a^q,b^q]$ and $\vec{e}_{ij}(\bar{t})\in \mathcal{B}^k_{ij} \land \vec{e}_{ij}(\bar{t})\in \mathcal{B}^q_{ij}$. We thus arrived at the contradiction as $\mathcal{B}^k_{ij}\cap \mathcal{B}^q_{ij}=\emptyset$.
\end{proof}
The next two facts define conflicting conjunctions over cycles of tasks in $\mathcal{G}_{\bar{\psi}}$. Indeed, conflicts may arise if the cycle closure relation \eqref{eq:edge sequence} can not be satisfied under a cycle of tasks in $\mathcal{G}_{\bar{\psi}}$. Namely, we deal with conjunction of tasks $\land_{(r,s)\in\epsilon(\vec{\omega})}\phi_{rs}$ where $\vec{\omega}$ is a cycle of length $l$ in $\mathcal{G}_{\bar{\psi}}$ and $\phi_{rs}$ is a task of type \eqref{eq: always splitted specification} or \eqref{eq: eventually splitted specification}. The case in which each $\phi_{rs}$ is a task of type \eqref{eq: splitted specification} that contains conjunctions, is a generalization of this simpler case as we show later. For clarity of presentation, we adopt the notation $\phi_{\omega[k,k+1]}$ to indicate a task $\phi_{rs}$ such that $(r,s)=(\omega[k],\omega[k+1])$ for $k\in1,\ldots l-1$. Likewise, we denote the corresponding time interval, the super-level set, predicate function and predicate of $\phi_{\omega[k,k+1]}$ as $[a,b]_{\omega[k,k+1]}$, $\mathcal{B}_{\omega[k,k+1]}$, $h_{\omega[k,k+1]}$ and $\mu_{\omega[k,k+1]}$ respectively. We thus have the notational equivalence $\land_{(r,s)\in\epsilon(\omega)}\phi_{rs} = \land_{k=1}^{l-1} \phi_{\omega[k,k+1]}$.
We now present the next two types of conflicting conjunctions.
\begin{Fact}(Conflict of type 3)\label{conflict 3}
Consider a cycle $\vec{\omega}$ of length $l$ in $\mathcal{G}_{\bar{\psi}}$ such that each edge $(\vec{\omega}[k],\vec{\omega}[k+1])$ corresponds to a unique task $\phi_{\omega[k,k+1]}=G_{[a,b]}\mu_{\omega[k,k+1]}$. 
If $\bigcap_{k=1}^{l-1}[a,b]_{\omega[k,k+1]}\neq\emptyset$ and 
\begin{equation}\label{eq:Minkowski condition for conflict}
\bigoplus_{k=1}^{p}\mathcal{B}_{\omega[k,k+1]}\cap \Bigl(- \bigoplus_{k=p+1}^{l-1}\mathcal{B}_{\omega[k,k+1]}\Bigr)= \emptyset
\end{equation}
for some $1\leq p\leq l-1$ then $\bigwedge_{k=1}^{l-1} \phi_{\omega[k,k+1]}$ is a conflicting conjunction.
\end{Fact}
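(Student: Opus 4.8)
The plan is to argue by contradiction, reusing the template of Facts~\ref{conflict 1} and \ref{conflict 2} but invoking the cycle closure identity in \eqref{eq:edge sequence} in place of a single-edge argument. Suppose there is a trajectory $\vec{x}(t)$ with $\vec{x}(t)\models\bigwedge_{k=1}^{l-1}\phi_{\omega[k,k+1]}$. Then $\rho^{\phi_{\omega[k,k+1]}}(\vec{x},0)>0$ for every $k$, and since each $\phi_{\omega[k,k+1]}=G_{[a,b]_{\omega[k,k+1]}}\mu_{\omega[k,k+1]}$, the robust semantics of the $G$ operator give $h_{\omega[k,k+1]}(\vec{e}_{\omega[k,k+1]}(t))>0$, hence $\vec{e}_{\omega[k,k+1]}(t)\in\mathcal{B}_{\omega[k,k+1]}$, for all $t\in[a,b]_{\omega[k,k+1]}$ and all $k=1,\dots,l-1$.

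Next I would exploit the hypothesis $\bigcap_{k=1}^{l-1}[a,b]_{\omega[k,k+1]}\neq\emptyset$: picking any $\bar{t}$ in this common intersection gives $\vec{e}_{\omega[k,k+1]}(\bar{t})\in\mathcal{B}_{\omega[k,k+1]}$ simultaneously for all $k$. Applying the cycle closure identity from \eqref{eq:edge sequence} at time $\bar{t}$ yields $\sum_{k=1}^{l-1}\vec{e}_{\omega[k,k+1]}(\bar{t})=\vec{0}$, which I split at the index $p$ as $\sum_{k=1}^{p}\vec{e}_{\omega[k,k+1]}(\bar{t})=-\sum_{k=p+1}^{l-1}\vec{e}_{\omega[k,k+1]}(\bar{t})$. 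By definition of the Minkowski sum the left-hand side lies in $\bigoplus_{k=1}^{p}\mathcal{B}_{\omega[k,k+1]}$ and the right-hand side in $-\bigoplus_{k=p+1}^{l-1}\mathcal{B}_{\omega[k,k+1]}$; being equal, this common vector lies in $\bigoplus_{k=1}^{p}\mathcal{B}_{\omega[k,k+1]}\cap(-\bigoplus_{k=p+1}^{l-1}\mathcal{B}_{\omega[k,k+1]})$, contradicting \eqref{eq:Minkowski condition for conflict}.

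The argument is short; the only delicate points are bookkeeping. First, the common-intersection hypothesis must be used in full — pairwise overlap of the intervals would not produce a single $\bar{t}$ at which all relative states sit inside their super-level sets, which is exactly what the cycle closure relation needs. Second, one must be consistent with the orientation of the edges along $\vec{\omega}$, so that $\epsilon(\vec{\omega})=\{(\omega[k],\omega[k+1])\}$ and the telescoping sum of the $\vec{e}_{\omega[k,k+1]}$ is genuinely $\vec{0}$, and with the convention $-\mathcal{A}=\{-v\mid v\in\mathcal{A}\}$ when moving the second block across the equality. No concavity of the $h_{\omega[k,k+1]}$ is required here; Assumption~\ref{concavity} only enters later, when this necessary condition is rendered as a checkable convex constraint for Theorem~\ref{convex optimization theorem}. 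Finally, the generalization mentioned after the statement, in which each $\phi_{rs}$ is itself a conjunction as in \eqref{eq: splitted specification}, follows by applying the same argument to any selection of one conjunct per edge of the cycle.
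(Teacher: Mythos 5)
Your proof is correct and follows essentially the same route as the paper's: contradiction via the robust semantics of $G$, a common time instant $\bar{t}$ from the intersection of the intervals, the cycle closure identity \eqref{eq:edge sequence} split at index $p$, and the resulting membership in both Minkowski sums contradicting \eqref{eq:Minkowski condition for conflict}. Your bookkeeping remarks (full intersection rather than pairwise overlap, edge orientation, no need for concavity here) are accurate but do not change the argument.
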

\begin{proof}
We prove the theorem by contradiction. Assume that there exists a state trajectory $\vec{x}(t)$ such that $\vec{x}(t) \models \bigwedge_{k=1}^{l-1} \phi_{\omega[k,k+1]}$. Then we know from the robust satisfaction of such conjunction that $\rho^{\bigwedge_{k=1}^{l-1} \phi_{\omega[k,k+1]}}(\vec{x},0) = \min_{k=1,\ldots l-1} \{ \rho^{\phi_{\omega[k,k+1]}}(\vec{x},0) \} > 0 \Rightarrow \rho^{\phi_{\omega[k,k+1]}}(\vec{x},0)>0  \; \forall k=1,\ldots l-1$. Recalling the definition of robust semantics for the always formulas we can then write $h_{\omega[k,k+1]}(\vec{e}_{\omega[k,k+1]}(t))>0,  
 \forall t\in[a,b]_{\omega[k,k+1]},\forall k=1,\ldots l-1$ and thus $\vec{e}_{\omega[k,k+1]}(t)\in \mathcal{B}_{\omega[k,k+1]},  \forall t\in[a,b]_{\omega[k,k+1]},\forall k=1,\ldots l-1$. We now recall from \eqref{eq:edge sequence} that for a cycle of edges, we have $\sum_{k=1}^{l-1}\vec{e}_{\omega[k,k+1]}(t) = 0 \Rightarrow \sum_{k=1}^{p}\vec{e}_{\omega[k,k+1]}(t) = - \sum_{k=p+1}^{l-1}\vec{e}_{\omega[k,k+1]}(t)$ for any $1\leq p\leq l-2$. Since we know that $\bigcup_{k=1}^{l-1} [a,b]_{\omega[k,k+1]} \neq \emptyset$ then there exist a time instant $\bar{t}\in \bigcup_{k=1}^{l-1} [a,b]_{\omega[k,k+1]}$ such that the three relations $\sum_{k=1}^{p}\vec{e}_{\omega[k,k+1]}(\bar{t})\in \bigoplus_{k=1}^{p}\mathcal{B}_{\omega[k,k+1]}$, $-\sum_{k=p+1}^{l-1}\vec{e}_{\omega[k,k+1]}(\bar{t}) \in \bigl(-\bigoplus_{k=p+1}^{l-1}\mathcal{B}_{\omega[k,k+1]}\bigr)$ and $\sum_{k=1}^{p}\vec{e}_{\omega[k,k+1]}(\bar{t}) = - \sum_{k=p+1}^{l-1}\vec{e}_{\omega[k,k+1]}(\bar{t})$ must hold. We thus arrived at a contradiction as the three aforementioned conditions can not hold jointly if $\bigoplus_{k=1}^{p}\mathcal{B}_{\omega[k,k+1]}\cap \Bigl(- \bigoplus_{k=p+1}^{l-1}\mathcal{B}_{\omega[k,k+1]}\Bigr)= \emptyset$. Since the argument is independent of the index $p$ chosen, the proof is valid for any chosen $1\leq p\leq l-2$.
\end{proof}
\begin{Fact}(Conflict of type 4)\label{conflict 4}
Consider a cycle $\vec{\omega}$ of length $l$ in $\mathcal{G}_{\psi}$ such that for each edge $(\vec{\omega}[k],\vec{\omega}[k+1])$ there corresponds a unique task $\phi_{\omega[k,k+1]}$, for which we have $\phi_{\omega[k,k+1]}=F_{[a,b]}\mu_{\omega[k,k+1]} \forall k=1,\ldots q$ and $\phi_{\omega[k,k+1]}=G_{[a,b]}\mu_{\omega[k,k+1]}\forall k=q+1,\ldots l-1$ for some $1\leq q\leq l-1$. If $\bigoplus_{k=1}^{p}\mathcal{B}_{\omega[k,k+1]}\cap \Bigl(-\bigoplus_{k=p+1}^{l-1}\mathcal{B}_{\omega[k,k+1]}\Bigr)\neq \emptyset$ for some $1\leq p\leq l-1$ and if either 1) $q=1$, $\bigcap_{k=q+1}^{l-1}[a,b]_{\omega[k,k+1]}\neq\emptyset$, $[a,b]_{\omega[1,2]}\subseteq \bigcap_{k=q+1}^{l-1}[a,b]_{\omega[k,k+1]}$ or 2) $q\geq1$, $\bigcap_{k=q+1}^{l-1}[a,b]_{\omega[k,k+1]}\neq\emptyset$, $[a,b]_{\omega[k,k+1]}=[\bar{t},\bar{t}] \forall k={1,\ldots q}$, $\bar{t}\in \bigcap_{k=q+1}^{l-1}[a,b]_{\omega[k,k+1]}$; then  $\bigwedge_{k=1}^{l-1} \phi_{\omega[k,k+1]}$ is a conflicting conjunction.
\end{Fact}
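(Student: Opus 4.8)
The plan is to argue by contradiction, exactly in the spirit of the proof of Fact \ref{conflict 3}, the only genuinely new ingredient being the need to locate a single time instant at which every relative state along the cycle simultaneously lies in the corresponding super-level set. Assume, for contradiction, that some trajectory $\vec{x}(t)$ satisfies $\bigwedge_{k=1}^{l-1}\phi_{\omega[k,k+1]}$; then $\rho^{\phi_{\omega[k,k+1]}}(\vec{x},0)>0$ for every $k=1,\ldots,l-1$. For the always-tasks ($k=q+1,\ldots,l-1$) the robust semantics of $G$ give $\vec{e}_{\omega[k,k+1]}(t)\in\mathcal{B}_{\omega[k,k+1]}$ for all $t\in[a,b]_{\omega[k,k+1]}$, whereas for each eventually-task ($k=1,\ldots,q$) the robust semantics of $F$ only yield the existence of some $t_k\in[a,b]_{\omega[k,k+1]}$ with $\vec{e}_{\omega[k,k+1]}(t_k)\in\mathcal{B}_{\omega[k,k+1]}$.

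The crucial step is to upgrade these per-task facts into one common instant $\bar{t}$ valid for all $l-1$ tasks, and the two alternative hypotheses are tailored precisely for this. Under hypothesis~1 there is exactly one eventually-task, $\phi_{\omega[1,2]}$; I set $\bar{t}:=t_1$ obtained from the $F$ semantics, and since $[a,b]_{\omega[1,2]}\subseteq\bigcap_{k=q+1}^{l-1}[a,b]_{\omega[k,k+1]}$ this instant also lies in every always-interval, so $\vec{e}_{\omega[k,k+1]}(\bar{t})\in\mathcal{B}_{\omega[k,k+1]}$ for all $k=1,\ldots,l-1$. Under hypothesis~2 every eventually-interval is the singleton $[\bar{t},\bar{t}]$, so the $F$ semantics already force $\vec{e}_{\omega[k,k+1]}(\bar{t})\in\mathcal{B}_{\omega[k,k+1]}$ for $k=1,\ldots,q$, and because $\bar{t}\in\bigcap_{k=q+1}^{l-1}[a,b]_{\omega[k,k+1]}$ the $G$ semantics give the same membership for $k=q+1,\ldots,l-1$. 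Either way, I obtain a single $\bar{t}$ with $\vec{e}_{\omega[k,k+1]}(\bar{t})\in\mathcal{B}_{\omega[k,k+1]}$ for every $k$.

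To close the argument I fix the index $p$ furnished by the hypothesis and invoke the cycle-closure identity \eqref{eq:edge sequence}, $\sum_{k=1}^{l-1}\vec{e}_{\omega[k,k+1]}(\bar{t})=\vec{0}$, which rearranges into $\sum_{k=1}^{p}\vec{e}_{\omega[k,k+1]}(\bar{t})=-\sum_{k=p+1}^{l-1}\vec{e}_{\omega[k,k+1]}(\bar{t})$. The left-hand side belongs to $\bigoplus_{k=1}^{p}\mathcal{B}_{\omega[k,k+1]}$ and the right-hand side to $-\bigoplus_{k=p+1}^{l-1}\mathcal{B}_{\omega[k,k+1]}$, so this common vector is a point of $\bigoplus_{k=1}^{p}\mathcal{B}_{\omega[k,k+1]}\cap\bigl(-\bigoplus_{k=p+1}^{l-1}\mathcal{B}_{\omega[k,k+1]}\bigr)$, contradicting the emptiness of that set. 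I expect the real work to be the middle step: making the interplay between the existential times of the $F$-tasks and the universal windows of the $G$-tasks fully rigorous, i.e. verifying that hypotheses~1 and~2 are exactly what is needed to collapse the individual $t_k$ into one shared instant, and handling the boundary indices $p\in\{1,l-1\}$, where one of the two Minkowski sums degenerates to a single point.
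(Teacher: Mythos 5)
Your proof is correct and follows exactly the argument the paper intends: the paper omits the proof of Fact \ref{conflict 4}, stating only that it mirrors that of Fact \ref{conflict 3}, and your use of hypotheses 1) and 2) to collapse the existential times of the $F$-tasks into a single instant $\bar{t}$ lying in every $G$-window, followed by the cycle-closure identity \eqref{eq:edge sequence}, is precisely the right way to fill that gap. One remark: the Fact as printed states the Minkowski-sum condition with $\neq\emptyset$, which is evidently a typo for $=\emptyset$ (compare \eqref{eq:Minkowski condition for conflict} in Fact \ref{conflict 3} and the resolution in Lemma \ref{lemma type 4}), and your argument correctly works with the emptiness version.
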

\begin{proof}
The proof is similar to the proof of Fact \ref{conflict 3} and it is not reported due to space limitation.
\end{proof}
\subsection{Resolving conflicting conjunctions}
During the process of rewriting $\psi$ into $\bar{\psi}$, it is relevant to avoid the insurgence of conflicts 1-4. With this aim, we provide the Lemmas \ref{lemma type 1}-\ref{lemma type 4} which define additional convex constraints on the parameters of the parametric tasks in $\bar{\mathcal{F}}$. We then include such constraints in \eqref{eq:convex optimization problem} to avoid conflicts of type 1-4.
\begin{lemma}\label{lemma type 1}
Let $(i,j) \in \mathcal{E}_{\bar{\psi}}$ with task $\phi_{ij}=\phi_{ij}^s \land \phi_{ij}^r$ such that $s,r\in\mathcal{I}_{ij}^g$ and $[a^r,b^r]\cap[a^s,b^s]\neq\emptyset$.  Then, enforcing the convex constraints
\begin{subequations}\label{eq:type 1 constraint}
\begin{align}
\begin{split}\label{eq: first case}
&\mathcal{B}^s_{ij}\subseteq\mathcal{B}^r_{ij} \; \text{if} \quad (b^r-a^r)\geq (b^s- a^s)\land \phi^r_{ij},\phi^s_{ij}\in\bar{\mathcal{F}}
\end{split}\\
\begin{split}\label{eq: second case}
&\mathcal{B}^r_{ij}\subseteq\mathcal{B}^s_{ij} \; \text{if} \quad (b^r-a^r)< (b^s- a^s)\land \phi^r_{ij},\phi^s_{ij}\in\bar{\mathcal{F}}
\end{split}\\
\begin{split}
&\mathcal{B}^s_{ij}\subseteq\mathcal{B}^r_{ij} \; \text{if} \quad \phi^r_{ij}\in\mathcal{F},\phi^s_{ij}\in\bar{\mathcal{F}}
\end{split}
\end{align}
\end{subequations}
where $\mathcal{F}$ and $\bar{\mathcal{F}}$ are the sets of non-parametric and parametric tasks respectively, ensures that $\phi_{ij}^s \land \phi_{ij}^r$ is not a conflicting conjunction of type 1.
\end{lemma}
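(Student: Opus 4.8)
The plan is to show that each of the three convex constraints in \eqref{eq:type 1 constraint} directly rules out the only way a conflict of type 1 (Fact \ref{conflict 1}) can occur between $\phi_{ij}^s$ and $\phi_{ij}^r$. Recall from Fact \ref{conflict 1} that, since $s,r\in\mathcal{I}_{ij}^g$ and the time intervals already overlap by hypothesis ($[a^r,b^r]\cap[a^s,b^s]\neq\emptyset$), the conjunction $\phi_{ij}^s\land\phi_{ij}^r$ is conflicting \emph{if and only if} (as far as the stated sufficient condition is concerned) $\mathcal{B}_{ij}^s\cap\mathcal{B}_{ij}^r=\emptyset$. So it suffices to prove that under any of the three cases of \eqref{eq:type 1 constraint} we have $\mathcal{B}_{ij}^s\cap\mathcal{B}_{ij}^r\neq\emptyset$, and then invoke the contrapositive of Fact \ref{conflict 1}.

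The argument is then immediate by set inclusion. First I would treat case \eqref{eq: first case}: here both tasks are parametric, $\phi_{ij}^r,\phi_{ij}^s\in\bar{\mathcal{F}}$, so $\mathcal{B}_{ij}^s$ and $\mathcal{B}_{ij}^r$ are nonempty hyper-rectangles as per \eqref{eq:parameteric superlevel set} (nonempty because each $\vec{\nu}$ has strictly positive entries, being a feasible point of \eqref{eq:convex optimization problem}). Imposing $\mathcal{B}_{ij}^s\subseteq\mathcal{B}_{ij}^r$ gives $\mathcal{B}_{ij}^s\cap\mathcal{B}_{ij}^r=\mathcal{B}_{ij}^s\neq\emptyset$. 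Case \eqref{eq: second case} is symmetric, with the roles of $s$ and $r$ swapped, yielding $\mathcal{B}_{ij}^s\cap\mathcal{B}_{ij}^r=\mathcal{B}_{ij}^r\neq\emptyset$. For the third case, $\phi_{ij}^r\in\mathcal{F}$ is a non-parametric inherited task whose super-level set $\mathcal{B}_{ij}^r$ is nonempty by the standing assumption that the original task $\psi$ admits a satisfying trajectory (so each of its conjuncts has a nonempty super-level set), while $\phi_{ij}^s\in\bar{\mathcal{F}}$ is parametric; again $\mathcal{B}_{ij}^s\subseteq\mathcal{B}_{ij}^r$ forces $\mathcal{B}_{ij}^s\cap\mathcal{B}_{ij}^r=\mathcal{B}_{ij}^s\neq\emptyset$. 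In every case the hypothesis of Fact \ref{conflict 1} fails, hence $\phi_{ij}^s\land\phi_{ij}^r$ is not a conflicting conjunction of type 1.

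The one point requiring a little care — and the closest thing to an obstacle — is the claim that constraints \eqref{eq:type 1 constraint} are genuinely \emph{convex} and therefore admissible inside \eqref{eq:convex optimization problem}. For this I would note that a set inclusion $\mathcal{B}_{ij}^s\subseteq\mathcal{B}_{ij}^r$ between two axis-aligned hyper-rectangles $\mathcal{H}(\vec p_s,\vec\nu_s)\subseteq\mathcal{H}(\vec p_r,\vec\nu_r)$ is equivalent to the componentwise linear inequalities $\vec p_s[k]-\vec\nu_s[k]/2\geq \vec p_r[k]-\vec\nu_r[k]/2$ and $\vec p_s[k]+\vec\nu_s[k]/2\leq \vec p_r[k]+\vec\nu_r[k]/2$ for all $k=1,\dots,n$, which are linear — hence convex — in the decision variables $\vec\eta_{ij}^s,\vec\eta_{ij}^r$. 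When $\phi_{ij}^r\in\mathcal{F}$ its super-level set $\mathcal{B}_{ij}^r$ is a fixed convex set (concave $h_{ij}^r$ by Assumption \ref{concavity}), and $\mathcal{H}(\vec p_s,\vec\nu_s)\subseteq\mathcal{B}_{ij}^r$ holds iff $h_{ij}^r(\vec v)\geq 0$ at each of the $2^n$ vertices $\vec v\in\mathcal{P}(\vec p_s,\vec\nu_s)$ by Prop. \ref{low inclusion}, i.e. $2^n$ convex constraints of the form $-h_{ij}^r(\vec v)\le 0$ in $\vec\eta_{ij}^s$ — exactly as in the discussion following Lemma \ref{single formula decomposition lemma}. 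The whole argument reduces to this observation plus the trivial logical step via the contrapositive of Fact \ref{conflict 1}; there is no deep estimate involved.
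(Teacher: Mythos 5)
Your proof is correct and follows essentially the same route as the paper's (one-line) argument: the inclusion constraints force $\mathcal{B}_{ij}^s\cap\mathcal{B}_{ij}^r\neq\emptyset$, which negates the hypothesis of Fact \ref{conflict 1}. You additionally make explicit two points the paper leaves implicit --- that the included set must be nonempty for the inclusion to rule out an empty intersection, and that the inclusion constraints are indeed convex in the parameters --- both of which are worthwhile clarifications rather than deviations.
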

\begin{proof}
Let $[a^r,b^r]\cap[a^s,b^s]\neq\emptyset$, then  \eqref{eq:type 1 constraint} contradicts the conditions of conflicting conjunction of type 1 according to Fact \ref{conflict 1}.
\end{proof}
Constraints \eqref{eq: first case}-\eqref{eq: second case} specify that when two parametric formulas could be in conflict according to Fact \ref{conflict 1}, then we decide to include the super-level set of tasks with shorter time intervals into the ones with longer time intervals. 
\begin{lemma} \label{lemma type 2}
Let $(i,j) \in \mathcal{E}_{\bar{\psi}}$ with task $\phi_{ij}=\phi_{ij}^s \land \phi_{ij}^r$ such that $s\in\mathcal{I}_{ij}^g$,  $r\in\mathcal{I}_{ij}^f$ and $[a^r,b^r]\subseteq[a^s,b^s]$. Then, enforcing the convex constraints
\begin{subequations}\label{eq:type 2 constraints}
\begin{align}
\begin{split}\label{eq:type 2 constraint case 1}
\mathcal{B}^r_{ij}\subseteq\mathcal{B}^s_{ij} \quad \text{if} \quad \phi^s_{ij}\in\mathcal{F},\phi^r_{ij}\in\mathcal{\bar{F}}
\end{split}\\
\begin{split}\label{eq:type 2 constraint case 2}
\mathcal{B}^s_{ij}\subseteq\mathcal{B}^r_{ij} \quad \text{if} \quad \phi^s_{ij}\in\mathcal{\bar{F}},\phi^r_{ij}\in\mathcal{F}
\end{split}\\
\begin{split}\label{eq:type 2 constraint case 2}
\mathcal{B}^r_{ij}\subseteq\mathcal{B}^s_{ij} \quad \text{if} \quad \phi^s_{ij}\in\mathcal{\bar{F}},\phi^r_{ij}\in\mathcal{\bar{F}}
\end{split}
\end{align}
\end{subequations}
where $\mathcal{F}$ and $\bar{\mathcal{F}}$ are the sets of non-parametric and parametric tasks respectively, ensure that $\phi_{ij}^s \land \phi_{ij}^r$ is not a conflicting conjunction of type 2.
\end{lemma}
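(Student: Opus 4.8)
The plan is to mirror the reasoning used for Lemma \ref{lemma type 1}. By Fact \ref{conflict 2}, a type-2 conflict between the \emph{always} task $\phi^s_{ij}$ ($s\in\mathcal{I}^g_{ij}$) and the \emph{eventually} task $\phi^r_{ij}$ ($r\in\mathcal{I}^f_{ij}$) requires \emph{simultaneously} the temporal nesting $[a^r,b^r]\subseteq[a^s,b^s]$ and the spatial disjointness $\mathcal{B}^r_{ij}\cap\mathcal{B}^s_{ij}=\emptyset$. The temporal nesting is part of the hypothesis, so it suffices to guarantee $\mathcal{B}^r_{ij}\cap\mathcal{B}^s_{ij}\neq\emptyset$: then not all conditions of Fact \ref{conflict 2} can hold and $\phi^s_{ij}\wedge\phi^r_{ij}$ cannot be a type-2 conflict.

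First I would note that each of the three cases in \eqref{eq:type 2 constraints} forces one super-level set to be contained in the other, and that in every case the \emph{contained} set is a parametric one, i.e.\ a hyper-rectangle with strictly positive side lengths by Def. \ref{hyper-rectangle}, hence nonempty. Therefore the imposed inclusion makes $\mathcal{B}^r_{ij}\cap\mathcal{B}^s_{ij}$ coincide with this nonempty parametric hyper-rectangle, which contradicts the disjointness condition of Fact \ref{conflict 2}. This settles the conflict-avoidance part; it is essentially the contrapositive of Fact \ref{conflict 2}, exactly as in Lemma \ref{lemma type 1}. A minor point worth stating explicitly is that no nonemptiness assumption on the inherited (non-parametric) super-level sets is needed, precisely because the subset is always chosen to be the parametric one.

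Next I would verify that the three inclusions of \eqref{eq:type 2 constraints} are convex constraints in the free parameters $\vec{\eta}$, so that they may be appended to the program \eqref{eq:convex optimization problem}. In the two mixed cases (one task parametric, one inherited from $\psi$) the inclusion has the form $\mathcal{H}(\vec{p},\vec{\nu})\subseteq\{\vec{e}\mid h(\vec{e})\geq 0\}$ with $h$ a fixed concave predicate function by Assumption \ref{concavity}; by Prop. \ref{convex combination} and Prop. \ref{low inclusion} this is equivalent to the $2^n$ inequalities $h(\vec{v})\geq 0$ over the vertices $\vec{v}\in\mathcal{P}(\vec{p},\vec{\nu})$, and since each vertex coordinate is affine in $(\vec{p},\vec{\nu})$ and $-h$ is convex, every such inequality $-h(\vec{v})\leq 0$ is convex. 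In the remaining case, where both tasks are parametric, the inclusion is a containment of two axis-aligned hyper-rectangles, which is equivalent to the $2n$ affine (hence convex) box inequalities comparing the respective lower and upper corner coordinates. Combining the two parts yields the statement.

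The part I expect to demand the most care is this convexity bookkeeping in the mixed cases: one must orient each inclusion so that the parametric set is the subset, which is exactly why the lemma splits into three cases rather than stating a single inclusion; the purely set-theoretic core is immediate once Fact \ref{conflict 2} is invoked.
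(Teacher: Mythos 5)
Your proposal is correct and follows essentially the same route as the paper's (one-line) proof: the imposed inclusion forces $\mathcal{B}^r_{ij}\cap\mathcal{B}^s_{ij}\neq\emptyset$, which negates the disjointness condition of Fact \ref{conflict 2}, so with $[a^r,b^r]\subseteq[a^s,b^s]$ given, no type-2 conflict can arise. Your additional observations --- that the contained set is always the (nonempty) parametric hyper-rectangle, and that each inclusion reduces to convex vertex constraints --- are correct elaborations of details the paper leaves implicit.
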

\begin{proof}
Let $[a^r,b^r]\subseteq[a^s,b^s]$, then \eqref{eq:type 2 constraints} contradicts the conditions of conflicting conjunction of type 2 according to Fact \ref{conflict 2}.
\end{proof}
When more than two tasks are considered in conjunction, then Lemma \ref{lemma type 1}-\ref{lemma type 2} can be applied to every possible couple of tasks in the conjunction.
\begin{lemma} \label{lemma type 3}
Let $\vec{\omega}$ be a cycle of length $l$ in $\mathcal{G}_{\bar{\psi}}$ such that for each $(\vec{\omega}[k],\vec{\omega}[k+1])$ there corresponds a unique task $\phi_{\omega[k,k+1]}=G_{[a,b]}\mu_{\omega[k,k+1]}$. If $\bigcap_{k=1}^{l-1}[a,b]_{\omega[k,k+1]}\neq\emptyset$ then imposing the constraints 
\begin{equation}\label{eq:type 3 constraint}
\bigoplus_{k =1}^p\mathcal{B}_{\omega[k,k+1]}\subseteq \Bigl(-\bigoplus_{k=p+1}^{l-1}\mathcal{B}_{\omega[k,k+1]}\Bigr)
\end{equation}
for some $1\leq p\leq l-1$, ensures that $\bigwedge_{k=1}^{l-1}\phi_{\omega[k,k+1]}$ is not a conflicting conjunction of type 3.
\end{lemma}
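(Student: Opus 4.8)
The plan is to show that the constraint \eqref{eq:type 3 constraint} directly negates one of the two conjunctive hypotheses appearing in the characterization of a type-3 conflict (Fact \ref{conflict 3}), so that by contraposition the conjunction cannot be of that type. First I would recall that, by Fact \ref{conflict 3}, the conjunction $\bigwedge_{k=1}^{l-1}\phi_{\omega[k,k+1]}$ is a conflicting conjunction of type 3 precisely when \emph{both} (i) $\bigcap_{k=1}^{l-1}[a,b]_{\omega[k,k+1]}\neq\emptyset$ \emph{and} (ii) the emptiness relation \eqref{eq:Minkowski condition for conflict} holds for some $1\leq p\leq l-1$. Under the hypotheses of the lemma the temporal condition (i) is assumed to hold, so the only way to prevent a type-3 conflict is to guarantee that (ii) fails for \emph{every} admissible $p$; it suffices to exhibit a single $p$ for which the Minkowski intersection in \eqref{eq:Minkowski condition for conflict} is nonempty, because Fact \ref{conflict 3} only certifies a conflict when such a $p$ exists.

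Next I would argue that \eqref{eq:type 3 constraint}, imposed for that chosen $p$, yields exactly this nonemptiness. Indeed each $\mathcal{B}_{\omega[k,k+1]}$ is a (parametric) hyper-rectangle, hence nonempty and compact; by Proposition \ref{hyper mink} the Minkowski sums $\bigoplus_{k=1}^p\mathcal{B}_{\omega[k,k+1]}$ and $\bigoplus_{k=p+1}^{l-1}\mathcal{B}_{\omega[k,k+1]}$ are again nonempty hyper-rectangles, and therefore so is $-\bigoplus_{k=p+1}^{l-1}\mathcal{B}_{\omega[k,k+1]}$. The inclusion \eqref{eq:type 3 constraint} then forces
\begin{equation*}
\bigoplus_{k=1}^p\mathcal{B}_{\omega[k,k+1]}\cap\Bigl(-\bigoplus_{k=p+1}^{l-1}\mathcal{B}_{\omega[k,k+1]}\Bigr)=\bigoplus_{k=1}^p\mathcal{B}_{\omega[k,k+1]}\neq\emptyset,
\end{equation*}
so condition \eqref{eq:Minkowski condition for conflict} is violated for this $p$. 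Consequently the pair of requirements in Fact \ref{conflict 3} cannot be met, and $\bigwedge_{k=1}^{l-1}\phi_{\omega[k,k+1]}$ is not a conflicting conjunction of type 3.

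Finally, I would add the remark that the set-inclusion constraint \eqref{eq:type 3 constraint} is indeed convex in the decision parameters $\vec{\eta}$, so that it can legitimately be appended to the convex program \eqref{eq:convex optimization problem}: since all sets involved are hyper-rectangles with centers and half-widths that are affine in the parameters (by Proposition \ref{hyper mink}), the inclusion of one hyper-rectangle in another reduces to finitely many affine inequalities on the centers and dimension vectors, exactly as in the containment conditions used in Lemma \ref{single formula decomposition lemma}. The only mild subtlety — and the step I expect to require the most care — is the bookkeeping of signs and of the partition index $p$: one must make sure that the negated Minkowski sum $-\bigoplus_{k=p+1}^{l-1}\mathcal{B}_{\omega[k,k+1]}$ is handled correctly (it is the hyper-rectangle with negated center and unchanged dimension vector), and that the chosen $p$ in \eqref{eq:type 3 constraint} is the \emph{same} $p$ against which one checks Fact \ref{conflict 3}; since Fact \ref{conflict 3} only needs \emph{some} $p$, fixing one value of $p$ in the constraint is enough to block the conflict, which is what makes the argument go through cleanly.
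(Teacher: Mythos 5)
Your proof follows the same route as the paper's: the inclusion \eqref{eq:type 3 constraint} forces the Minkowski intersection in \eqref{eq:Minkowski condition for conflict} to be nonempty, which negates the second hypothesis of Fact \ref{conflict 3}; the paper's own proof is a two-sentence version of exactly this argument. One logical slip is worth fixing, though. You correctly note that blocking Fact \ref{conflict 3} requires the emptiness in \eqref{eq:Minkowski condition for conflict} to fail for \emph{every} admissible $p$, but you then assert that exhibiting a \emph{single} $p$ with nonempty intersection suffices ``because Fact \ref{conflict 3} only certifies a conflict when such a $p$ exists.'' That justification is backwards: Fact \ref{conflict 3} fires as soon as \emph{some} $p$ yields an empty intersection, so producing one good $p$ does not by itself exclude a bad $p'$. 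The conclusion nevertheless survives, because the condition is in fact independent of $p$: for any $p$, $\bigoplus_{k=1}^{p}\mathcal{B}_{\omega[k,k+1]}\cap\bigl(-\bigoplus_{k=p+1}^{l-1}\mathcal{B}_{\omega[k,k+1]}\bigr)\neq\emptyset$ holds if and only if $\vec{0}\in\bigoplus_{k=1}^{l-1}\mathcal{B}_{\omega[k,k+1]}$, so nonemptiness for one $p$ implies nonemptiness for all of them. Adding that one line closes the gap (and incidentally explains why the lemma may impose \eqref{eq:type 3 constraint} for a single, arbitrarily chosen $p$). The remaining steps---nonemptiness of the summands, $A\subseteq B\Rightarrow A\cap B=A$, and the observation that the inclusion is convex in the parameters---are correct and consistent with the paper.
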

\begin{proof}
Relation \eqref{eq:type 3 constraint} is a special case of the set inclusion in \eqref{eq:Minkowski condition for conflict}. Since we considered $\bigcap_{k=1}^{l-1}[a,b]_{\omega[k,k+1]}\neq\emptyset$, then \eqref{eq:type 3 constraint} contradicts \eqref{eq:Minkowski condition for conflict} as per Fact \ref{conflict 3}.
\end{proof}
\begin{lemma} \label{lemma type 4}
Let $\vec{\omega}$ be a cycle of length $l$ in $\mathcal{G}_{\psi}$ such that for each edge $(\vec{\omega}[k],\vec{\omega}[k+1])$ there corresponds a unique task $\phi_{\omega[k,k+1]}$. Arbitrarily assume $\phi_{\omega[k,k+1]}=F_{[a,b]}\mu_{\omega[k,k+1]} \forall k=1,\ldots q$ and $\phi_{\omega[k,k+1]}=G_{[a,b]}\mu_{\omega[k,k+1]}\forall k=q+1,\ldots l-1$ for some $1\leq q\leq l-1$. If either 1) $q=1$, $\bigcap_{k=q+1}^{l-1}[a,b]_{\omega[k,k+1]}\neq\emptyset$, $[a,b]_{\omega[1,2]}\subseteq \bigcap_{k=q+1}^{l-1}[a,b]_{\omega[k,k+1]}$, or 2) $q\geq1$, $\bigcap_{k=q+1}^{l-1}[a,b]_{\omega[k,k+1]}\neq\emptyset$, $[a,b]_{\omega[k,k+1]}=[\bar{t},\bar{t}] \forall k={1,\ldots q}$, $\bar{t}\in \bigcap_{k=q+1}^{l-1}[a,b]_{\omega[k,k+1]}$;
then imposing \eqref{eq:type 3 constraint} ensures that $\bigwedge_{k=1}^{l-1}\phi_{\omega[k,k+1]}$ is not a conflicting conjunction of type 4. 
\end{lemma}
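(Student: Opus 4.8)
The plan is to reduce the statement to Fact~\ref{conflict 4}. The temporal hypotheses 1) and 2) of the lemma are, by construction, identical to those of Fact~\ref{conflict 4}, so to show that $\bigwedge_{k=1}^{l-1}\phi_{\omega[k,k+1]}$ is not a conflicting conjunction of type~4 it suffices to show that imposing \eqref{eq:type 3 constraint} falsifies the remaining, spatial hypothesis of Fact~\ref{conflict 4} — namely the emptiness of $\bigoplus_{k=1}^{p}\mathcal{B}_{\omega[k,k+1]}\cap\bigl(-\bigoplus_{k=p+1}^{l-1}\mathcal{B}_{\omega[k,k+1]}\bigr)$ for the chosen $p$, the analogue of \eqref{eq:Minkowski condition for conflict}. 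For context, the (unreported) proof of Fact~\ref{conflict 4} follows the template of Fact~\ref{conflict 3}: under either 1) or 2) there is a single time instant $\bar t$, lying in $\bigcap_{k=q+1}^{l-1}[a,b]_{\omega[k,k+1]}$ and in the window(s) of the eventually-tasks, at which satisfaction of the conjunction forces $\vec{e}_{\omega[k,k+1]}(\bar t)\in\mathcal{B}_{\omega[k,k+1]}$ for every $k=1,\dots,l-1$; combining this with the cycle-closure identity $\sum_{k=1}^{l-1}\vec{e}_{\omega[k,k+1]}(\bar t)=\vec 0$ of \eqref{eq:edge sequence}, written as $\sum_{k=1}^{p}\vec{e}_{\omega[k,k+1]}(\bar t)=-\sum_{k=p+1}^{l-1}\vec{e}_{\omega[k,k+1]}(\bar t)$, exhibits a common point of $\bigoplus_{k=1}^{p}\mathcal{B}_{\omega[k,k+1]}$ and $-\bigoplus_{k=p+1}^{l-1}\mathcal{B}_{\omega[k,k+1]}$, contradicting their disjointness.

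The key step is then essentially one line. Each $\mathcal{B}_{\omega[k,k+1]}$ is non-empty — a parametric super-level set is a hyper-rectangle with strictly positive dimension vector, and the inherited collaborative tasks are assumed well-posed — so the Minkowski sums $\bigoplus_{k=1}^{p}\mathcal{B}_{\omega[k,k+1]}$ and $-\bigoplus_{k=p+1}^{l-1}\mathcal{B}_{\omega[k,k+1]}$ are non-empty, and therefore the inclusion \eqref{eq:type 3 constraint}, $\bigoplus_{k=1}^{p}\mathcal{B}_{\omega[k,k+1]}\subseteq -\bigoplus_{k=p+1}^{l-1}\mathcal{B}_{\omega[k,k+1]}$, forces their intersection to be non-empty for that $p$. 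This contradicts the spatial hypothesis of Fact~\ref{conflict 4}, so the temporal hypotheses 1)/2) alone cannot produce a type-4 conflict, which is the claim. For the parametric blocks, Proposition~\ref{hyper mink} renders the two sides of \eqref{eq:type 3 constraint} hyper-rectangles whose centres and dimension vectors are affine in the parameters $\vec{\eta}^{\pi_i^j}_{rs}$, so \eqref{eq:type 3 constraint} is a linear constraint and can be appended to \eqref{eq:convex optimization problem}, just as the same relation is used in Lemma~\ref{lemma type 3}.

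The delicate point — the part to be carried over from the proof of Fact~\ref{conflict 4} rather than directly from Fact~\ref{conflict 3} — is the construction of the common instant $\bar t$ in the two regimes. In case 1), with $q=1$, one uses $[a,b]_{\omega[1,2]}\subseteq\bigcap_{k=2}^{l-1}[a,b]_{\omega[k,k+1]}$ to conclude that whatever instant of its own window the single eventually-task $\phi_{\omega[1,2]}$ is satisfied at, that instant already lies in every always-window, so all $l-1$ membership constraints bind there simultaneously. In case 2), with general $q$, one uses that all eventually-tasks share the singleton window $[\bar t,\bar t]$ with $\bar t\in\bigcap_{k=q+1}^{l-1}[a,b]_{\omega[k,k+1]}$, so again all constraints bind at $\bar t$. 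Once this alignment is granted, the spatial argument above applies verbatim and is independent of the admissible choice of $p$; and if some $\phi_{\omega[k,k+1]}$ is itself a conjunction as in \eqref{eq: splitted specification}, the statement extends by applying the above to one conjunct per edge, in the spirit of the per-conjunct reductions used elsewhere in this section.
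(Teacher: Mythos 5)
Your proof is correct and follows essentially the same route as the paper's: the paper merely states that the argument mirrors that of Lemma~\ref{lemma type 3}, namely that imposing the inclusion \eqref{eq:type 3 constraint} negates the Minkowski-intersection emptiness condition of Fact~\ref{conflict 4} under the temporal hypotheses 1)/2), so the conflict pattern cannot occur. The extra details you supply (non-emptiness of the super-level sets so that the inclusion indeed yields a non-empty intersection, and the construction of the common instant $\bar t$ in the two regimes for the unreported proof of Fact~\ref{conflict 4}) only make explicit what the paper leaves implicit.
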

\begin{proof}
The proof is similar to the proof of Lemma \ref{lemma type 3} and it is not reported due to space limitations.
\end{proof}
\begin{remark}
In the statement of Lemmas \ref{lemma type 3}-\ref{lemma type 4} we considered that no edge of a cycle $\vec{\omega}$ contains conjunctions. Nevertheless, in the case $\phi_{\omega[k,k+1]}$ contains conjunctions as $\phi_{\omega[k,k+1]}=\bigwedge_{s\in \mathcal{I}^g_{\omega[k,k+1]}}\phi^s_{\omega[k,k+1]} \land \bigwedge_{s\in \mathcal{I}^f_{\omega[k,k+1]}}\phi^s_{\omega[k,k+1]}$ according to \eqref{eq: splitted specification}, it is possible to define the combination set $C = \prod_{k=1}^{l-1}\mathcal{I}_{\omega[k,k+1]}$ where $\mathcal{I}_{\omega[k,k+1]}$ contains all the tasks indices for edge $(\vec{\omega}[k],\vec{\omega}[k+1])$ as per \eqref{eq: splitted specification}. Then, for each combination of indices  $c\in \mathcal{C}$ we can check if the conjunction $\land_{k=1}^{l-1}\phi^{c[k]}_{\omega[k,k+1]}$ satisfies the conditions for conflicting conjunctions of type 3-4 and introduce new constraints of type \eqref{eq:type 3 constraint} accordingly. 
\end{remark}
\begin{remark}
The computation of the Minkowski sum in \eqref{eq:type 3 constraint} can possibly become a complex computation as not all the tasks defined over a cycle $\vec{\omega}$ are parametric as per \eqref{eq:infty predicate function}. Nevertheless, we can under-approximate the sets $\mathcal{B}_{\omega[k,k+1]}$ such that $\phi_{\omega[k,k+1]}\in\mathcal{F}$ by a hyper-rectangle that can be computed offline when $\psi$ is defined. \cite{ziegler2012lectures}.
\end{remark}
We conclude this section with the final result of our work.
 \begin{theorem}\label{the last fermat theorem}
     Let the conditions of Theorem \ref{convex optimization theorem} be satisfied. If a solution to the convex optimization problem \eqref{eq:convex optimization problem} exists after the inclusion of constraints \eqref{eq:type 1 constraint}, \eqref{eq:type 2 constraints}, \eqref{eq:type 3 constraint}, then the resulting global formula $\bar{\psi}$ defined as in \eqref{eq:specific form of phi bar} from Thm. \ref{convex optimization theorem} does not contain conflicting conjunctions as per Fact \ref{conflict 1}-\ref{conflict 4}.
 \end{theorem}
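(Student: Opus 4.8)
The plan is to argue by a finite case enumeration over all conjunctions that actually appear in $\bar{\psi}$, showing that every potential instance of a conflict of type 1--4 is excluded by one of Lemmas \ref{lemma type 1}--\ref{lemma type 4}, whose hypotheses hold because the matching constraint from \eqref{eq:type 1 constraint}, \eqref{eq:type 2 constraints} or \eqref{eq:type 3 constraint} was added to \eqref{eq:convex optimization problem} and the problem is feasible. First I would decompose $\bar{\psi}$ according to \eqref{eq:global specification rewritten} into three groups of conjuncts: the independent tasks $\phi_i$, the inherited collaborative tasks $\phi_{ij}$ with $(i,j)\in\mathcal{E}_\psi\cap\mathcal{E}_c$, and the parametric path tasks $\bar{\phi}^{\pi_i^j}_{rs}$ introduced by the decomposition. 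Then, by Definition \ref{conflicting conjunction} and Facts \ref{conflict 1}--\ref{conflict 4}, any conflicting conjunction of type 1--4 in $\bar{\psi}$ involves either two tasks carried by a single edge of $\mathcal{G}_{\bar{\psi}}$ (types 1 and 2) or the tasks around a single cycle $\vec{\omega}$ of $\mathcal{G}_{\bar{\psi}}$ (types 3 and 4), so it suffices to treat these two situations.

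For the single-edge case I would fix $(i,j)\in\mathcal{E}_{\bar{\psi}}$ and an arbitrary pair $\phi^k_{ij},\phi^q_{ij}$ with $k,q\in\mathcal{I}_{ij}$ whose time intervals satisfy the overlap/inclusion hypothesis of Fact \ref{conflict 1} or Fact \ref{conflict 2}. Using the standing assumption that $\psi$ itself contains no conflicting conjunction, at least one of the two tasks must be parametric, i.e. lie in $\bar{\mathcal{F}}$, since otherwise both are inherited from $\psi$ over that edge and would already form a pre-existing conflict. Hence exactly one of the constraint branches in \eqref{eq:type 1 constraint} (when $k,q\in\mathcal{I}^g_{ij}$) or \eqref{eq:type 2 constraints} (when one index lies in $\mathcal{I}^g_{ij}$ and the other in $\mathcal{I}^f_{ij}$) applies, and by Lemma \ref{lemma type 1} or Lemma \ref{lemma type 2} it forces the relevant inclusion between $\mathcal{B}^k_{ij}$ and $\mathcal{B}^q_{ij}$, so $\mathcal{B}^k_{ij}\cap\mathcal{B}^q_{ij}\neq\emptyset$ and the hypothesis of the corresponding Fact fails; two eventually tasks on one edge never conflict and need no treatment. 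For the cycle case I would enumerate, for every simple cycle $\vec{\omega}$ of $\mathcal{G}_{\bar{\psi}}$ and every choice $c\in C=\prod_{k}\mathcal{I}_{\omega[k,k+1]}$ of one conjunct per edge as in the remark following Lemma \ref{lemma type 4}, the resulting single-task-per-edge cycle conjunction; whenever its temporal pattern matches the hypotheses of Fact \ref{conflict 3} or Fact \ref{conflict 4}, the constraint \eqref{eq:type 3 constraint} was included for that cycle and splitting index $p$, and Lemma \ref{lemma type 3} or Lemma \ref{lemma type 4} yields $\bigoplus_{k=1}^{p}\mathcal{B}_{\omega[k,k+1]}\subseteq(-\bigoplus_{k=p+1}^{l-1}\mathcal{B}_{\omega[k,k+1]})$, which makes the intersection in \eqref{eq:Minkowski condition for conflict} nonempty and rules out the conflict.

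Finally I would invoke feasibility: since \eqref{eq:convex optimization problem} augmented with \eqref{eq:type 1 constraint}, \eqref{eq:type 2 constraints} and \eqref{eq:type 3 constraint} admits a solution, all these inclusions hold simultaneously at the returned parameters in $\Theta$, so every conjunction in $\bar{\psi}$ of the forms covered above escapes the hypotheses of Facts \ref{conflict 1}--\ref{conflict 4}; hence $\bar{\psi}$ contains no conflicting conjunction of type 1--4. The main obstacle I anticipate is the exhaustiveness bookkeeping rather than any single estimate: one must argue carefully that every newly created conflict necessarily involves at least one parametric task (so that a constraint is indeed available to cancel it), that the inherited part of $\bar{\psi}$ is conflict-free because $\psi$ is, and that the finite enumeration over pairs on each edge and over $c\in C$ on each cycle truly covers all instances of Facts \ref{conflict 1}--\ref{conflict 4}; the reduction of cycles whose edges carry conjunctions to the combination set $C$, together with tracking which direction of the Minkowski-sum inclusion each constraint branch enforces, is the most delicate point.
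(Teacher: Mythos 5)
Your proposal is correct and follows essentially the same route as the paper, whose entire proof is the one-line statement that the result follows directly from Lemmas \ref{lemma type 1}--\ref{lemma type 4}; you simply make explicit the case enumeration (edge pairs for types 1--2, cycles with one conjunct per edge for types 3--4) and the role of feasibility that the paper leaves implicit. Your added observations---that any new conflict must involve at least one parametric task since $\psi$ is assumed conflict-free, and that two eventually tasks on one edge need no treatment---are consistent with the paper's setup and do not change the argument.
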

 \begin{proof}
Follows directly from Lemmas \ref{lemma type 1}-\ref{lemma type 4}.
 \end{proof}
Thus, Theorem \ref{the last fermat theorem} resolves the feasibility issues that were not considered in Theorem \ref{convex optimization theorem} so that $\bar{\psi}$ does not suffer from conflicting conjunctions according to Fact \ref{conflict 1}-\ref{conflict 4}.

\section{SIMULATIONS}\label{simulations}
\begin{figure*}[h]
\begin{subfigure}[t]{.33\textwidth}
\centering
  \includegraphics[width=1\linewidth]{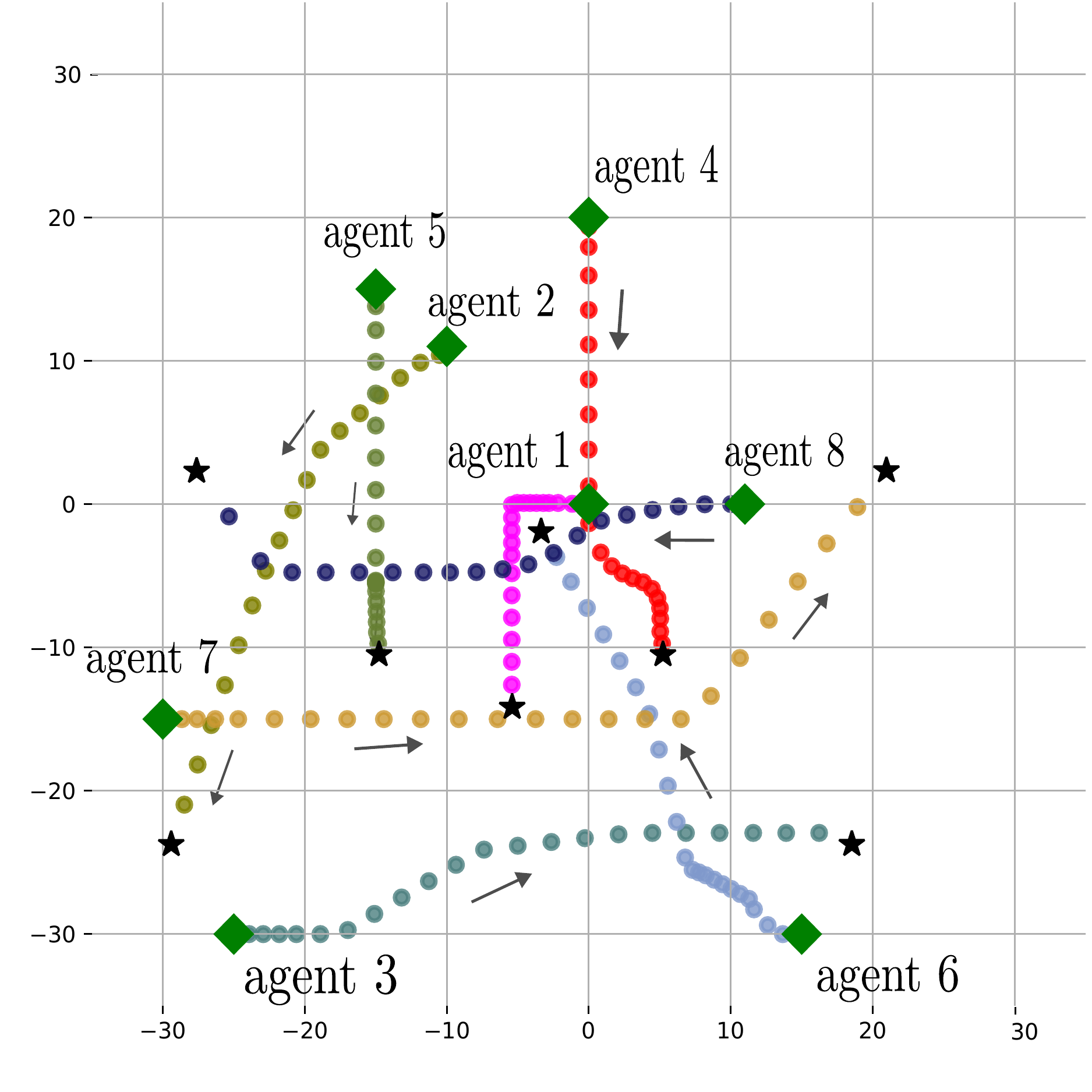}
  \caption{}
  \label{fig:t25}
\end{subfigure}\hspace{1em}%
\begin{subfigure}[t]{.33\textwidth}
\centering
\includegraphics[width=\linewidth,trim = {0cm 0cm 0cm 1cm}]{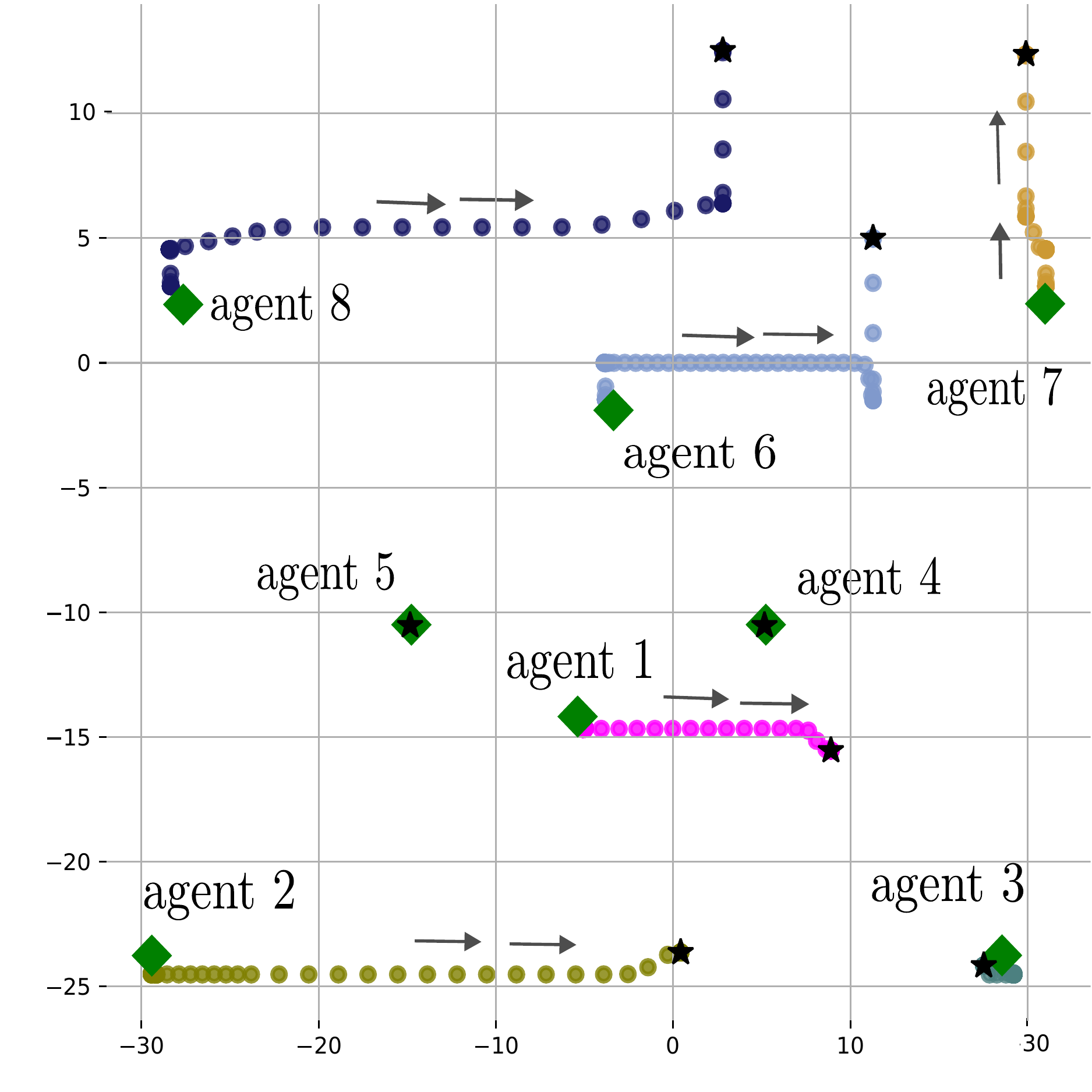}
  \caption{}
  \label{fig:t35}
\end{subfigure}\hspace{1em}%
\begin{subfigure}[t]{.33\textwidth}
\centering
    \includegraphics[width=1\linewidth]{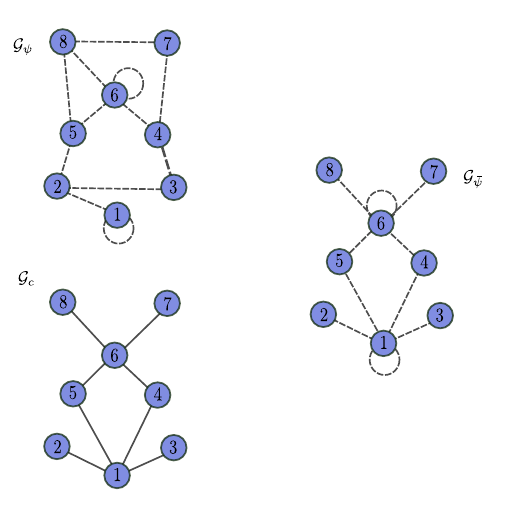}
    \caption{}
    \label{fig: sim2 decomposition}
\end{subfigure} 
\caption{Trajectory evolution of the agents from time $t=0s$ to $t=15s$ (a) and from time $t=15s$ to $t=28s$ (b); Short solid arrows represent the direction of movement of the agents; green lozenges represent the starting positions of the agents, while  black stars represent the final positions.  Communication graph $\mathcal{G}_c$, initial task graph $\mathcal{G}_{\psi}$ and final task graph $\mathcal{G}_{\bar{\psi}}$ (c). Edges $(2,5), (3,2), (3,4), (7,4), (8,5)$, and $(7,8)$ in $\mathcal{G}_\psi$ are decomposed over the edges of $\mathcal{G}_c$ to obtain $\mathcal{G}_{\bar{\psi}}$.}
\vspace{-0.3cm}
\end{figure*}
We showcase our decomposition approach for a formation problem of 8 agents such that $\mathcal{V}:=\{1,\ldots8\}$. For this task, we assume the agents to be governed by single integrator dynamics $\dot{\vec{x}}_i=\vec{u}_i$ where $\vec{x}_i=[x_i,y_i]^T\in\mathbb{R}^2$ represents the position of the agents in a 2-dimensional space. We consider a time-varying formation task divided into two parts. First, a star formation needs to be achieved  given by the tasks: $\phi_{85}=G_{[10,15]}(||\vec{e}_{85}-[-15,15]^T||\leq3)$,
$\phi_{52}=G_{[10,15]}(||\vec{e}_{25}-[-15,-15]^T||\leq3)$, $\phi_{34}=G_{[10,15]}(||\vec{e}_{34}-[15,-15]^T||\leq3)$, $\phi_{74}=G_{[10,15]}(||\vec{e}_{74}-[15,15]^T||\leq3)$, $\phi_{46}=G_{[10,15]}(||\vec{e}_{46}-[10,-10]^T||\leq2)$, $\phi_{56}=G_{[10,15]}(||\vec{e}_{56}-[-10,-10]^T||\leq2)$. Second, the team of agents $\mathcal{V}_1=\{6,7,8\}$ and $\mathcal{V}_2=\{1,2,3\}$ get detached by the initial star formation such that agents $1$ and $6$  move independently toward the region $x>5,y>5$ and $x>5,y<-5$ respectively while team $\mathcal{V}_1$ and $\mathcal{V}_2$ achieve a triangle formation from $t=25s$ to $t=28s$. This second part of the formation is given by specifications $\phi_{1}=F_{[25,28]}([-1,0](\vec{x}_{1} - [5,0]^T)\leq0 )\wedge ([0,-1](\vec{x}_{1} - [0,-5]^T)\leq0$, $\phi_{6}=F_{[25,28]}([-1,0](\vec{x}_{6} - [5,0]^T)\leq0) \wedge ([0,-1](\vec{x}_{6} - [0,5]^T)\leq0$, $\phi_{32}=G_{[25,28]}(||\vec{e}_{32}-[16,0]^T||\leq2\sqrt{2})$, $\phi_{87}=G_{[25,28]}(||\vec{e}_{87}-[16,0]^T||\leq2\sqrt{2})$, $\phi_{21}=G_{[25,28]}(||\vec{e}_{21}-[-8,-8]^T||\leq2\sqrt{2})$,
$\phi_{68}=G_{[25,28]}(||\vec{e}_{68}-[-8,8]^T||\leq2\sqrt{2})$. The global task $\psi$ is the conjunction of the given tasks. 
We use the open-source optimization library \textit{CasADi} which leverages an interior-point method algorithm to solve the convex program in \eqref{eq:convex optimization problem} and find $\bar{\psi}$. The algorithm converged to an optimal solution in $0.019s$ running on an Intel-Core i7-1265U × 12 with 32 GB of RAM. In order to satisfy $\bar{\psi}$, we apply a fully distributed control barrier function-based controller similar to the approach proposed previously in \cite{LarsControl2} so that each agent $i\in\mathcal{V}$ needs only state information of its 1-hop neighbours in $\mathcal{N}_c(i)$ to compute its local control $\vec{u}_i$. The graphs $\mathcal{G}_{\psi}$, $\mathcal{G}_{c}$ and $\mathcal{G}_{\bar{\psi}}$ for the simulation are given in Fig. \ref{fig: sim2 decomposition}. The tasks $\phi_{25}$, $\phi_{32}$, $\phi_{34}$, $\phi_{74}$, $\phi_{85}$ and $\phi_{78}$ are decomposed according to Table \ref{tab:decomposed formulas parameters}. The agents' trajectories are simulated from $t=0$ to $t=28$ and shown in Fig. \ref{fig:t25}-\ref{fig:t35}.
\begin{table}[h]
\begin{tabularx}{\linewidth}{ccccc}
\toprule\toprule
     & \multicolumn{2}{c}{$\phi^{\pi_{5}^2}\quad$$\vec{\pi}_{5}^2=[5,1,2]$ }     & \multicolumn{2}{c}{$\phi^{\pi_{2}^3}\quad$$\vec{\pi}_{2}^3=[2,1,3]$}                                                 \\
\cmidrule(rl){2-3} \cmidrule(rl){4-5}
            & $\bar{\phi}^{\pi_{5}^2}_{51}$ & $\bar{\phi}^{\pi_{5}^2}_{12}$ & $\bar{\phi}^{\pi_{2}^3}_{21}$ & $\bar{\phi}^{\pi_{2}^3}_{13}$                                           \\
$\vec{p}_{rs}^{\pi_i^j}$   & $[9.99 , \text{-}4.34]$  & $[24.99 ,\text{-}10.65]$ & $[7.99,7.99]$     & $[7.99, \text{-}7.99]$                                             \\
$\vec{\nu}_{rs}^{\pi_i^j}$ & $[0.47,0.47]$     & $[1.65,1.62]$     & $[0.95,0.95]$     & $[1.25,1.26]$  
\\
\end{tabularx}
\begin{tabularx}{\linewidth}{lcccc}
\midrule
     & \multicolumn{2}{c}{$\phi^{\pi_{4}^3}\quad$$\vec{\pi}_{4}^3=[4,1,3]$}       & \multicolumn{2}{c}{$\phi^{\pi_{4}^7}\quad$$\vec{\pi}_{4}^7=[4,6,7]$}                                                 \\
\cmidrule(rl){2-3} \cmidrule(rl){4-5}
            & $\bar{\phi}^{\pi_{4}^3}_{41}$ & $\bar{\phi}^{\pi_{4}^3}_{13}$ & $\bar{\phi}^{\pi_{4}^7}_{46}$ & $\bar{\phi}^{\pi_{4}^7}_{67}$  
            \\
$\vec{p}_{rs}^{\pi_i^j}$   & $[\text{-}9.99,\text{-}4.35]$   & $[24.99, \text{-}10.65]$ & $[\text{-}9.99,9.99]$    & $[25.00,5.00]$                                              \\
$\vec{\nu}_{rs}^{\pi_i^j}$ & $[0.47,0.47]$     &       $[1.62, 1.62]$            & $[1.89,1.89]$     & $[0.91, 0.91]$\\
\end{tabularx}
\begin{tabularx}{\linewidth}{lcccc}
\midrule
     & \multicolumn{2}{c}{$\phi^{\pi_{5}^8}\quad$$\vec{\pi}_{5}^8=[5,6,8]$}       & \multicolumn{2}{c}{$\phi^{\pi_{8}^7}\quad$$\vec{\pi}_{8}^7=[8,6,7]$}                                                 \\
\cmidrule(rl){2-3} \cmidrule(rl){4-5}
            & $\bar{\phi}^{\pi_{5}^8}_{56}$ & $\bar{\phi}^{\pi_{5}^8}_{68}$ & $\bar{\phi}^{\pi_{8}^7}_{67}$ & $\bar{\phi}^{\pi_{8}^7}_{86}$                                           \\
      $\vec{p}_{rs}^{\pi_i^j}$      & $[10.00, 10.00]$  & $[\text{-}25.00 ,5.00]$  & $[7.99, \text{-}7.99]$   & $[7.99,  7.99]$                                             \\
    $\vec{\nu}_{rs}^{\pi_i^j}$        & $[1.89, 1.89]$    & $[0.91, 0.91]$   & $[0.95 0.95]$     & $[1.26,1.26]$  \\
            \midrule
\end{tabularx}
\caption{Decomposed tasks $\phi_{25}$, $\phi_{32}$, $\phi_{34}$, $\phi_{74}$, $\phi_{85}$ and $\phi_{78}$ with respective parameters.}
\label{tab:decomposed formulas parameters}
\vspace{-0.5cm}
\end{table}
\section{CONCLUSIONS}\label{conclusions}
We presented a framework STL task decomposition based on the communication graph of multi-agent systems. We showed that the newly defined tasks imply the original ones when a valid decomposition is obtained. We additionally gave a set of Facts describing four possible conflicting conjunctions related to the applied STL fragment. We then provided a set of convex constraints, under which such conflicts can be avoided during the decomposition. In future work, we will investigate task decomposition under time-varying communication topology and more complex parameterization for task decomposition.


\bibliographystyle{ieeetr} 
\bibliography{references} 

\end{document}